\newtheorem{proposition}{Proposition}
\newtheorem{theorem}{Theorem}
\newtheorem{lemma}{Lemma}
\newtheorem{corollary}{Corollary}
\theoremstyle{definition}
\newcommand{\e}{{\rm e}}
\definecolor{darkgreen}{rgb}{0,0.4,0.3}
\newcommand{\R}{\mathbb{R}} 
\newcommand{\real}{\mathbb R} 
\newcommand{\complex}{\mathbb C}
\newcommand{\Z}{\mathbb Z} 
\newcommand{\half}{\tfrac{1}{2}} 
\newcommand{\mo}[1]{\left| #1 \right|} 
\newcommand{\abs}{\mo} 
\newcommand{\hi}{\mathcal{H}} 
\newcommand{\lh}{\mathcal{L(H)}} 
\newcommand{\elle}[1]{\mathcal{L} \left( #1 \right)} 
\newcommand{\spanno}[1]{\mathrm{span}(#1)}
\newcommand{\ip}[2]{\left\langle\,#1\,|\,#2\,\right\rangle} 
\newcommand{\kb}[2]{|#1\rangle\langle#2|} 
\newcommand{\no}[1]{\left\|#1\right\|} 
\newcommand{\tr}[1]{{\rm tr}\left[#1\right]} 
\newcommand{\ran}[1]{\textrm{ran}\,#1} 
\newcommand{\id}{\mathbbm{1}} 
\newcommand{\rank}[1]{{\rm rank}\,#1} 
\newcommand{\vau}{\hat{a}} 
\newcommand{\vbu}{\hat{b}} 
\newcommand{\veu}{\hat{e}} 
\newcommand{\vsigma}{\vec{\sigma}}
\newcommand{\C}{\mathsf{C}}
\newcommand{\M}{\mathsf{M}}
\newcommand{\N}{\mathsf{N}}
\newcommand{\nN}{\widetilde{\mathsf{N}}}
\newcommand{\jmd}{\mathsf{j}} 
\newcommand{\jmdu}{\mathsf{j}_{u}} 
\newcommand{\en}{\mathcal{E}} 
\newcommand{\enf}{\mathcal{F}} 
\newcommand{\Zb}{\mathbb Z}
\newcommand{\ca}[1]{\mathcal{#1}}
\renewcommand{\ker}[1]{\textrm{ker}\,#1} 
\newcommand{\Pg}{P_{{\rm guess}}}
\newcommand{\Ppg}{\Pg^{{\rm post}}}
\newcommand{\Prg}{\Pg^{{\rm prior}}}
\newcommand{\Pscr}{\mathscr{P}}
\newcommand{\phii}{\varphi}
\newcommand{\lam}{\lambda}
\newcommand{\vxu}{\hat{x}} 
\renewcommand{\i}{{\rm i}}
\begin{document}

\title[]{State discrimination with post-measurement information and incompatibility of quantum measurements}

\author{Claudio Carmeli}
\email{claudio.carmeli@gmail.com}
\affiliation{DIME, Universit\`a di Genova, Via Magliotto 2, I-17100 Savona, Italy}

\author{Teiko Heinosaari}
\email{teiko.heinosaari@utu.fi}
\affiliation{QTF Centre of Excellence, Turku Centre for Quantum Physics, Department of Physics and Astronomy, University of Turku, FI-20014 Turku, Finland}

\author{Alessandro Toigo}
\email{alessandro.toigo@polimi.it}
\affiliation{Dipartimento di Matematica, Politecnico di Milano, Piazza Leonardo da Vinci 32, I-20133 Milano, Italy}
\affiliation{I.N.F.N., Sezione di Milano, Via Celoria 16, I-20133 Milano, Italy}

\begin{abstract}
We discuss the following variant of the standard minimum error state discrimination problem: Alice picks the state she sends to Bob among one of several disjoint state ensembles, and she communicates him the chosen ensemble only at a later time. Two different scenarios then arise: either Bob is allowed to arrange his measurement set-up after Alice has announced him the chosen ensemble, or he is forced to perform the measurement before of Alice's announcement. In the latter case, he can only post-process his measurement outcome when Alice's extra information becomes available. We compare the optimal guessing probabilities in the two scenarios, and we prove that they are the same if and only if there exist compatible optimal measurements for all of Alice's state ensembles. When this is the case, post-processing any of the corresponding joint measurements is Bob's optimal strategy in the post-measurement information scenario. Furthermore, we establish a connection between discrimination with post-measurement information and the standard state discrimination. By means of this connection and exploiting the presence of symmetries, we are able to compute the various guessing probabilities in many concrete examples.
\end{abstract}

\maketitle

\section{Introduction}\label{sec:intro}

Quantum state discrimination is one of the fundamental tasks in quantum information processing.
In the setting of state discrimination, a quantum system is prepared in one out of a finite collection of possible states, chosen with a certain apriori probability.
The aim is then to identify the correct state by making a single measurement, assuming that the possible states and their apriori probabilities are known before the measurement is chosen. 
This can be also seen as a task of retrieving classical information that has been encoded in quantum states. 
A collection of orthogonal pure states, or more generally mixed states with disjoint supports, can be perfectly discriminated, while in other cases one has to accept either error or inconclusive result. 
These alternatives lead to two main branches of discrimination problems, called minimum error discrimination and unambiguous discrimination, respectively, that have both been investigated extensively; 
thorough reviews are provided in \cite{Sedlak09, Bergou10, Bae13}. 

Several types of variants of state discrimination problem have been introduced and studied in the literature. 
A specific variant of minimum error state discrimination problem, called state discrimination with post-measurement information, was elaborated in \cite{BaWeWi08}. 
In this task, Alice encodes classical information in quantum states and Bob then performs a measurement to guess the correct state, but Alice announces some partial information on her encoding before Bob must make his guess. 
This task was further studied in \cite{GoWe10}, and it was suggested that the usefulness of post-measurement information distinguishes the quantum from the classical world.

In this work we reveal a link between the task of state discrimination with post-measurement information and the incompatibility of quantum measurements. 
We will first formalize discrimination tasks with pre-measurement and post-measurement information in a consistent way, allowing us to compare the optimal guessing probabilities in these two cases.
(For clarification, we point out that our formulation slightly differs from the one presented in \cite{BaWeWi08} and \cite{GoWe10}. The formulations are compared in Sec.~\ref{sec:comparison}.)
We then show that pre-measurement information is strictly more favorable than post-measurement information if and only if the optimal measurements for the subensembles are incompatible.
Since incompatibility is a genuine non-classical feature \cite{HeMiZi16,Plavala16, Jencova17}, this result uncovers a peculiarity that differentiates quantum from classical measurements. 

As a technical method to calculate the optimal guessing probabilities and optimal measurements, we show that it is always possible to transform the problem of state discrimination with post-measurement information into a usual minimum error state discrimination problem; more precisely, any state discrimination problem with post-measurement information is associated with standard state discrimination for a specific auxiliary state ensemble, in such a way that the two discrimination tasks have the same optimal measurements, and the respective success probabilities are related by a simple equation.
In this way, the known results for the usual minimum error discrimination can be used for state discrimination with post-measurement information.
 
Finally, we discuss the connection between state discrimination with post-measurement information and approximate joint measurements in the cases when the optimal measurements for the subensemble discrimination problems are incompatible.
We provide several examples, showing that the approximate joint measurement is sometimes optimal, although not always.
In particular, we present an analytic solution for the problem of state discrimination with post-measurement information of two Fourier conjugate mutually unbiased bases in arbitrary finite dimension.

\emph{Notations.} We deal with quantum systems associated with a finite dimensional Hilbert space $\hi$. We denote by $\lh$ the set of all linear operators on $\hi$, and $\id\in\lh$ is the identity operator of $\hi$. The \emph{states} of the system are all positive trace one operators in $\lh$. A \emph{measurement} with outcomes in a finite set $X$ is any positive operator valued measure (POVM) based on $X$, i.e., any mapping $\M:X\to\lh$ such that $\M(x)\geq 0$ for all $x\in X$ and $\sum_{x\in X}\M(x) = \id$.

\section{State discrimination with post-measurement information}\label{sec:start}

\subsection{General scenario}\label{sec:scenario}

A \emph{state ensemble} $\en$ is sequence of states $(\varrho_x)_{x\in X}$, labeled with a finite set $X$, together with an assignment of some prior probability $p(x)$ to each label $x\in X$. 
It is convenient to regard $\en$ as a map $X\to\lh$, given by $\en(x)=p(x)\varrho_x$. 
We say that a quantum system is prepared or chosen from the state ensemble $\en$ when a label $x\in X$ is picked according to the probability distribution $p$, and the system is then set in the corresponding state $\varrho_x$.

In the standard minimum error state discrimination scenario (see Fig.~\ref{fig:usual}), there is a state ensemble $\en$ that is known to two parties, Alice and Bob. 
Alice prepares a quantum system from $\en$ and the task of Bob is to guess the correct state. 
For a measurement $\M$ having the outcome set $X$, the guessing probability $\Pg(\mathcal{E};\M)$ is given as
\begin{equation*}
\Pg(\mathcal{E};\M) =  \sum_{x} \tr{\en(x) \M(x)} =  \sum_{x} p(x) \tr{\varrho_x \M(x)} \, .
\end{equation*}
The aim is to maximize the guessing probability, and we denote
\begin{equation}\label{eq:def_P(E)}
\Pg(\mathcal{E}) :=  \max_{\M}  \Pg(\mathcal{E};\M) \, ,
\end{equation}
where the optimization is over all measurements with outcome set $X$. 
This is called the optimal guessing probability for $\en$, and the minimum error discrimination problem is to find an optimizing measurement for a given state ensemble $\en$.
The problem was introduced in \cite{Holevo73,YuKeLa75,QDET76}. 
The existence of optimal measurements, i.e., the fact that in \eqref{eq:def_P(E)} the maximum is actually attained, follows by a compactness argument \cite[Proposition 4.1]{Holevo73}, \cite[Lemma 1]{YuKeLa75}.

\begin{figure}
\includegraphics[width=8cm]{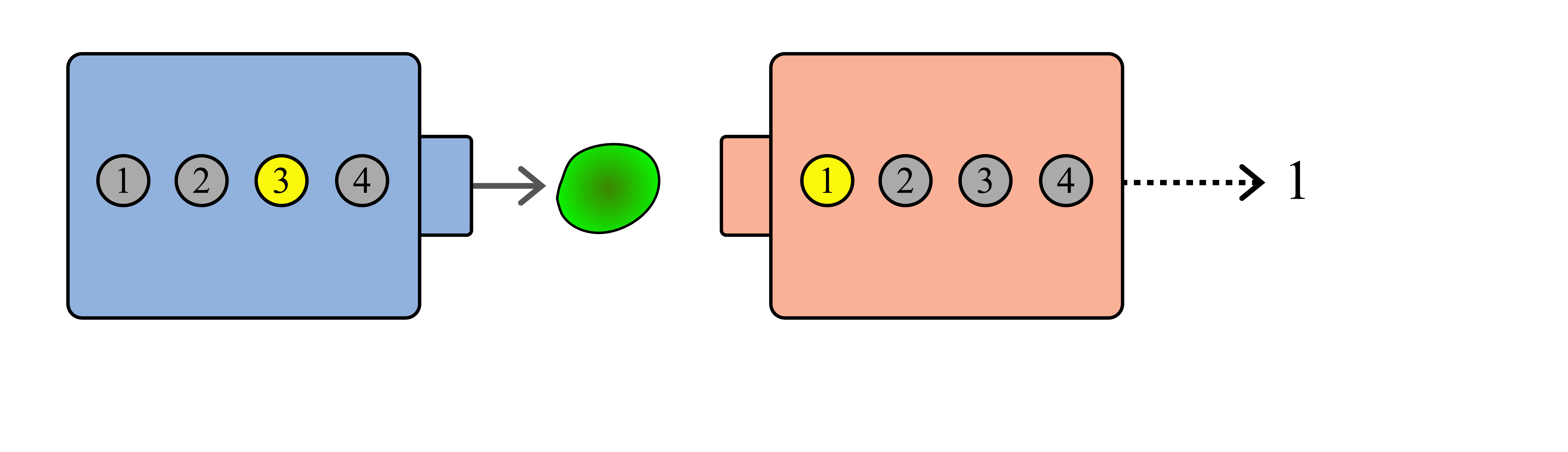}
\caption{\label{fig:usual} In minimum error state discrimination, Alice prepares a quantum state \(\varrho_x\) from a given state ensemble which is known also to Bob. Bob aims to determine the label \(x\) by performing a measurement which maximizes his guessing probability. In the depicted event, Bob is making an incorrect guess.}
\end{figure}

In the state discrimination with post-measurement information, the standard scenario is modified by adding a middle step to it. 
The starting point, known both to Alice and Bob, is a  state ensemble $\en$ and a partition $\Pscr = (X_\ell)_{\ell\in I}$ of the label set $X$ into nonempty disjoint subsets. 
For each index $\ell\in I$, the probability that a label occurs in $X_\ell$ is
\begin{equation}\label{eq:q}
q(\ell) \coloneqq \sum_{x \in X_\ell} p(x) \,.
\end{equation}
We further assume that $q(\ell)\neq 0$ to avoid trivial cases. 
Then, conditioning the state ensemble $\en$ to the occurrence of a label in $X_\ell$, we obtain a new state ensemble $\en_\ell$, which we call a \emph{subensemble} of $\en$. The label set of $\en_\ell$ is $X_\ell$, and
\begin{equation}\label{eq:E_l}
\en_\ell(x) := \frac{1}{q(\ell)} \en(x) \, , \qquad x\in X_\ell \, .
\end{equation}

The steps in the scenario are the following (see Fig.~\ref{fig:post}(a)) :
\begin{enumerate}[(i)]
\item Alice picks a label $x$ from the set $X$, according to the prior probability distribution $p$.   
She then prepares a state $\varrho_{x}$ and delivers this state to Bob. 
\item Bob performs a measurement $\M$, hence obtaining an outcome $y\in Y$ with  probability $\tr{\varrho_{x} \M(y)}$. The outcome set $Y$ of $\M$ is freely chosen by Bob. 
\item After the measurement is performed, Alice tells to Bob the index $\ell$ of the correct subset $X_{\ell}$ where the  label was picked from.
\item Based on the measurement outcome $y$ and on the announced index $\ell$, Bob must guess $x$. 
This means that Bob applies a function $f_\ell:Y \to X_\ell$ to the obtained measurement outcome $y$ and his guess is $f_\ell(y)$.
\end{enumerate}

\begin{figure}
\centering
\subfigure[]
{
\includegraphics[width=9cm]{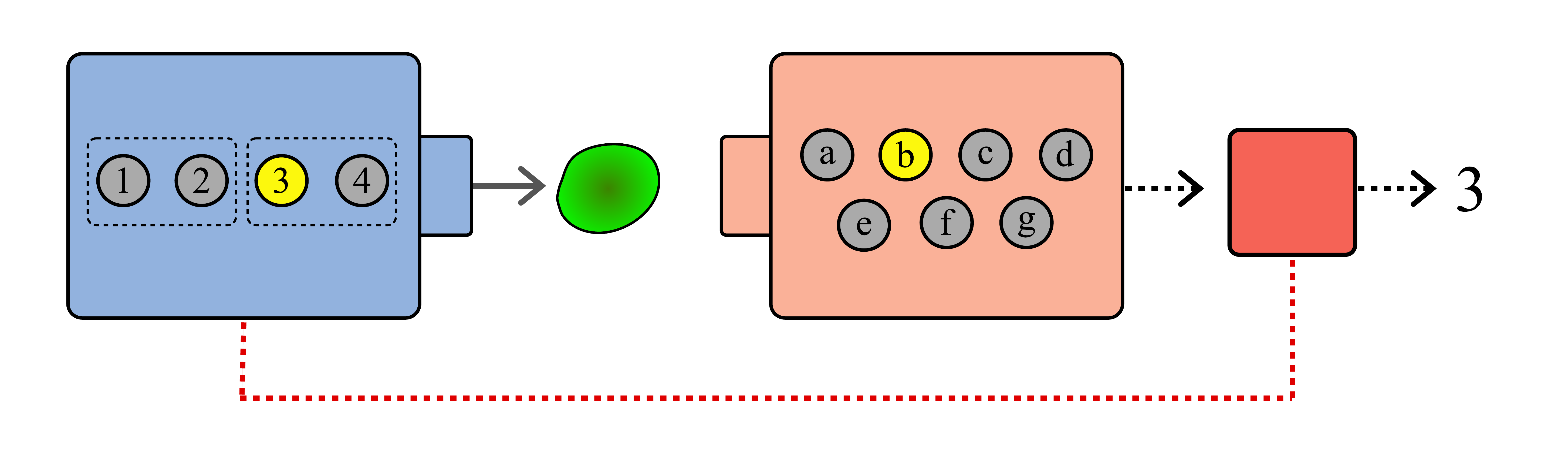}
}
\subfigure[]
{
\includegraphics[width=9cm]{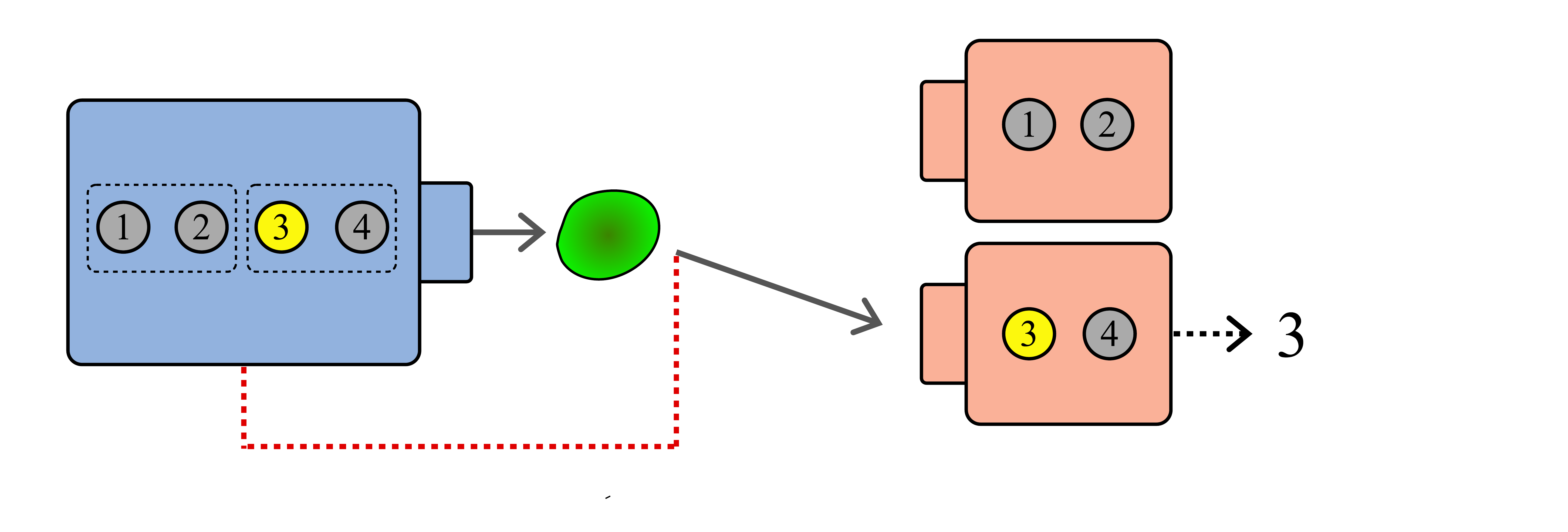}
}
\caption{\label{fig:post}
Alice encodes a label $x$ into a quantum state \(\varrho_x\) on which Bob performs a measurement. In the {\em post-measurement} information scenario (a), Alice announces the subset from which she picked $x$ after Bob has performed a measurement. 
After Alice's announcement, Bob can post-processes his measurement outcome accordingly, and finally he gives his guess. In the {\em pre-measurement} information scenario (b), Alice announcez the correct subset already before Bob arranges his measurement.}
\end{figure}

Bob's guessing strategy is therefore determined by a measurement $\M$ and post-processing functions $(f_\ell)_{\ell\in I}$.
We emphasize that the same measurement $\M$ is used at every round, while the choice of the implemented relabeling function is determined by the announced label $\ell$.

We denote by $\Ppg(\en;\Pscr;\M,(f_\ell)_{\ell\in I})$ the guessing probability in the previously described scenario, and further,  we denote by $\Ppg(\en;\Pscr)$ the maximum  of the guessing probability when $\M$ and $(f_\ell)_{\ell\in I}$ vary over all suitable measurements and relabeling functions, respectively. 
Remarkably, optimal measurements and relabeling functions for the discrimination problem with post-measurement information actually exist, as we will see in Sec.~\ref{sec:reduction} below.

To elaborate the expression of $\Ppg(\en;\Pscr;\M,(f_\ell)_{\ell\in I})$, we denote by $f_{\ell\ast} \M$ the post-processed measurement that Bob has effectively performed when he has applied $f_\ell$ after $\M$, i.e., the measurement that has outcomes $X_\ell$ and is defined as  
\begin{align}
\label{eq:defpush}
f_{\ell\ast} \M ( x ) \coloneqq \sum_{y \in f_\ell^{-1}(x)} \M(y) \, ,  \qquad x\in X_\ell \, ,
\end{align}
where $f_\ell^{-1}(x)$ denotes the preimage of $x$, i.e., $f_\ell^{-1}(x)=\{ y: f_\ell(y)=x\}$.
We can then write the guessing probability as
\begin{equation}\label{eq:post}
\Ppg(\en;\Pscr;\M,(f_\ell)_{\ell\in I})  = \sum_{\ell\in I} q(\ell)\Pg(\en_{\ell};f_{\ell\ast} \M )  \, .
\end{equation}

The use of post-measurement information cannot decrease the guessing probability, that is,
\begin{equation}\label{eq:post-helps}
\Pg(\en)  \leq  \Ppg(\en;\Pscr)   \, .
\end{equation}
Indeed, one possible strategy for Bob is to perform a measurement $\M$ with outcomes in $X$ that optimally discriminates $\en$. He thus obtains the correct outcome with the probability $\Pg(\en)$, but he doesn't announce his guess yet. Then, after hearing the
index $\ell$ of the correct subset $X_\ell$, Bob does the following. If his obtained measurement outcome $x$ belongs
to $X_\ell$, then Bob's guess is $x$. But if $x$ is not in $X_\ell$, then Bob infers that he got an incorrect result and chooses an arbitrary default label $x_\ell\in X_\ell$ as his guess. This means that the restrictions $\left.f_\ell\right|_{X_\ell}$ of Bob's relabeling functions are the identity maps on $X_\ell$, and $f_\ell(x) = x_\ell$ whenever $x\notin X_\ell$. In this way, the post-measurement information allows Bob to sometimes neglect incorrect results, hence his guessing probability cannot be lower than $\Pg(\en)$. 
Formally, $f_{\ell\ast} \M ( x ) \geq \M ( x )$ for all $x\in X_\ell$, implying 
\begin{equation}
\Pg(\en_\ell ; f_{\ell\ast} \M) \geq \frac{1}{q(\ell)}\sum_{x\in X_\ell} \tr{\en(x)\M(x)} \, . 
\end{equation}
Using this inequality in \eqref{eq:post}, we get \eqref{eq:post-helps}.

From \eqref{eq:post} we also conclude a simple upper bound,
\begin{equation}\label{eq:before}
\Ppg(\en;\Pscr)  \leq \sum_{\ell\in I} q(\ell) \Pg(\en_{\ell} ) \, .
\end{equation}
The right hand side of \eqref{eq:before} is the optimal success probability if Alice would tell the used state ensemble to Bob \emph{before} Bob performs a measurement, in which case Bob can choose the optimal measurement to discriminate the correct state ensemble (see Fig.~\ref{fig:post}(b)).
We will thereby denote
\begin{equation}\label{eq:prior-def}
\Prg(\en;\Pscr) :=  \sum_{\ell\in I} q(\ell) \Pg(\en_{\ell} ) \, .
\end{equation}

In summary, the optimal guessing probability with post-measurement information is bounded in the interval
\begin{equation}
\Pg(\en) \leq \Ppg(\en;\Pscr) \leq \Prg(\en;\Pscr) \,,
\end{equation}
whose left and right extremes correspond to situations where Alice gives no information at all and Alice gives the partial information before Bob's choice of measurement, respectively.

\subsection{Limiting to the standard form measurements}\label{sec:marginal}

To maximize the guessing probability in the previously described scenario, Bob must find the optimal measurement $\M$ and relabeling functions $(f_\ell)_{\ell\in I}$.
The outcome set of $\M$ is, in principle, arbitrary and the role of relabeling functions is to adjust the obtained measurement outcome to give a meaningful guess.
However, as we will next show, there is a class of measurements with a fixed outcome set, determined by the separation of $X$ into subsets $(X_\ell)_{\ell\in I}$, such that we can always restrict the optimization to this class. 

A natural choice for the outcome set of Bob's mesurement is the Cartesian product $\bigtimes_{\ell\in I} X_\ell$, where $X_\ell$ is the label set of $\en_\ell$. 
For simplicity, in the following we assume the index set $I=I_m := \{1,\ldots,m\}$.
Then, at each measurement round Bob obtains a measurement outcome $(x_1,\ldots,x_m)$, and when Alice tells him the correct index $\ell$,  Bob just picks the outcome $x_{\ell}$  accordingly.
The respective relabeling function $f_\ell$ is now just the projection $\pi_\ell$ from $X_1\times\cdots\times X_m$ into $X_\ell$.
\emph{When Bob's measurement has the Cartesian product $X_1\times\cdots\times X_m$ as its outcome set, we will use the shorthand notation} 
\begin{equation*}
\Ppg(\en;\Pscr;\C) := \Ppg(\en;\Pscr;\C,(\pi_\ell)_{\ell\in I_m}) \, .
\end{equation*}
We thus have
\begin{equation}\label{eq:post-C}
\begin{aligned}
& \Ppg(\en;\Pscr;\C) = \sum_{\ell=1}^m q(\ell) \Pg(\en_{\ell};\pi_{\ell\ast} \C ) \\
& \qquad \qquad = \sum_{\ell=1}^m q(\ell) \sum_{x_\ell\in X_\ell} \tr{\en_\ell (x_\ell ) \pi_{\ell\ast} \C (x_\ell)} \, ,
\end{aligned}
\end{equation}
where, according to \eqref{eq:defpush},
\begin{equation}\label{eq:marginPOVM}
\pi_{\ell\ast} \C (x) = \sum_{\substack{x_1\in X_1,\ldots,x_m\in X_m\\ \text{such that } x_\ell = x}} \C(x_1,\ldots,x_m) \, .
\end{equation}

The next result justifies the choice of the Cartesian product.

\begin{proposition}\label{prop:standard}
For any choice of measurement $\M$ and relabeling functions $f_1,\ldots,f_m$, there is a measurement $\C$ with  product outcome set $X_1 \times \cdots \times X_m$ such that 
\begin{align*}
\Ppg(\en;\Pscr;\M,(f_\ell)_{\ell\in I_m}) & = \Ppg(\en;\Pscr;\C) \, .
\end{align*}
\end{proposition}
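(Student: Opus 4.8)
The plan is to build the required product-outcome measurement $\C$ directly from the data $(\M,f_1,\ldots,f_m)$ by bundling the relabeling functions into a single map and coarse-graining $\M$ along it. Let $Y$ be the outcome set of $\M$, so that $f_\ell:Y\to X_\ell$ for each $\ell\in I_m$. I would define $f:Y\to X_1\times\cdots\times X_m$ by $f(y):=(f_1(y),\ldots,f_m(y))$ and then set $\C:=f_\ast\M$, that is,
\begin{equation*}
\C(x_1,\ldots,x_m):=\sum_{y\in f^{-1}(x_1,\ldots,x_m)}\M(y)\,,
\end{equation*}
in the notation of \eqref{eq:defpush}. Since $\C$ is obtained from $\M$ by merely grouping outcomes, it is automatically a POVM with outcome set $X_1\times\cdots\times X_m$: each $\C(x_1,\ldots,x_m)$ is a sum of positive operators, hence positive, and $\sum_{(x_1,\ldots,x_m)}\C(x_1,\ldots,x_m)=\sum_{y\in Y}\M(y)=\id$ because the preimages $f^{-1}(x_1,\ldots,x_m)$ form a partition of $Y$.

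The one computation that matters is that post-processing is functorial under composition of functions. For each $\ell$ one has $\pi_\ell\circ f=f_\ell$ as maps $Y\to X_\ell$, and therefore
\begin{equation*}
\pi_{\ell\ast}\C=\pi_{\ell\ast}(f_\ast\M)=(\pi_\ell\circ f)_\ast\M=f_{\ell\ast}\M\,.
\end{equation*}
The middle equality is where all the work is, and it is immediate from \eqref{eq:defpush}: the preimage $(\pi_\ell\circ f)^{-1}(x)$ is the disjoint union of the sets $f^{-1}(x_1,\ldots,x_m)$ over all tuples with $x_\ell=x$, so summing $\M$ over either description yields the same operator.

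With this identity the proposition follows by plugging into the formulas already established. Using \eqref{eq:post} for the left-hand side, then the identity $\pi_{\ell\ast}\C=f_{\ell\ast}\M$, and finally \eqref{eq:post-C} for the right-hand side,
\begin{align*}
\Ppg(\en;\Pscr;\M,(f_\ell)_{\ell\in I_m})&=\sum_{\ell=1}^m q(\ell)\,\Pg(\en_\ell;f_{\ell\ast}\M)\\
&=\sum_{\ell=1}^m q(\ell)\,\Pg(\en_\ell;\pi_{\ell\ast}\C)=\Ppg(\en;\Pscr;\C)\,.
\end{align*}

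There is essentially no serious obstacle; the only point to watch is that the $f_\ell$ are allowed to be completely arbitrary functions $Y\to X_\ell$, with no surjectivity or mutual-compatibility requirement, and the bundling map $f$ accommodates this precisely because its codomain is the full Cartesian product. It is also worth noting in passing that the construction does not enlarge the measurement in any essential sense, since $\C$ has at most $|Y|$ nonzero effects; this observation is what makes the reduction compatible with the existence-of-optimizers argument invoked in Sec.~\ref{sec:reduction}.
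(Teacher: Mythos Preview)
Your proof is correct and is essentially the paper's own argument: your bundled map $f=(f_1,\ldots,f_m)$ has $f^{-1}(x_1,\ldots,x_m)=f_1^{-1}(x_1)\cap\cdots\cap f_m^{-1}(x_m)$, so your $\C=f_\ast\M$ is exactly the measurement the paper defines, and your functoriality step $\pi_{\ell\ast}(f_\ast\M)=(\pi_\ell\circ f)_\ast\M=f_{\ell\ast}\M$ is precisely the marginal computation the paper carries out. The additional remarks on positivity, normalization, and the number of nonzero effects are fine but not needed for the statement.
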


\begin{proof}
We define $\C$ as
\begin{align*}
\C(x_1,\ldots,x_m) & = \sum_{y \in f_1^{-1}(x_1) \cap \cdots \cap f_m^{-1}(x_m)} \M(y) \,.
\end{align*}
Then, by \eqref{eq:marginPOVM},
\begin{equation*}
\pi_{\ell\ast} \C(x) = \sum_{y\in f_\ell^{-1}(x)} \M(y) = f_{\ell\ast} \M ( x ) \, ,
\end{equation*}
hence, $\pi_{\ell\ast} \C = f_{\ell\ast} \M$. 
\end{proof}

\subsection{Remarks on other formulations of the problem}\label{sec:comparison}

The problem of state discrimination with post-measurement information was first considered by Winter, Ballester and Wehner in \cite{BaWeWi08}. According to their approach, before Alice announces the subensemble the state was picked from, Bob is allowed to store both classical and quantum information; his classical resources are unlimited (an unbounded amount of classical memory), and on the quantum side he can use a string of qubits with prescribed length. Later, Gopal and Wehner (GW) restricted to the case where only classical information is available for Bob \cite{GoWe10}; for this reason, their approach more directly compares with ours.

In GW's problem, Alice encodes a string $x$ of classical information in one of $m$ possible quantum states $(\varrho_{x,b})_{b\in\ca{B}}$, with $\ca{B}=\{1,\ldots,m\}$.
The aim of Bob is to determine the string \(x\), irrespectively of the encoding chosen by Alice (see Fig.~\ref{fig:gw}). The set $\ca{X}$ from which $x$ is picked is the same for all encodings $b\in\ca{B}$, while the probability of selecting a specific encoding $b\in\ca{B}$ may depend on the chosen $x$. Thus, if $p(x,b)$ is the joint probability of picking the string $x$ and using the encoding $b$, Bob's received state is the mixture $\sum_{x,b} p(x,b)\varrho_{x,b}$. On this state, Bob performs a measurement $\C$ with outcomes in the Cartesian product $\ca{X}^m$, thus obtaining the result $(x_1,\ldots,x_m)$. Then, Alice declares him the selected encoding $b$, and, according to the announced $b$, Bob guesses the value $x_b$ for the string $x$. Clearly, also in this scenario, Bob's maximum success probability with post-measurement information $p^I_{\rm succ}$ can not be smaller than the analogous probability without post-measurement information $p_{\rm succ}$. When $p^I_{\rm succ} \equiv p_{\rm succ}$, \emph{post-measurement information is useless} for the encoding at hand; in this case, if a measurement $\M$ with outcomes in $\ca{X}$ is optimal for the problem without post-measurement information, then the diagonal measurement 
$$
\C(x_1,\ldots,x_m) = \begin{cases}
\M(x_0) & \text{ if $x_1 = x_2 = \ldots = x_m \equiv x_0$}\\
0 & \text{otherwise}
\end{cases}
$$
is optimal for the problem with post-measurement information. Diagonal measurements correspond to the situation in which Bob guesses the same string $x_0$ independently of Alice's announced encoding, i.e., he completely ignores post-measurement information.

\begin{figure}
\includegraphics[width=9cm]{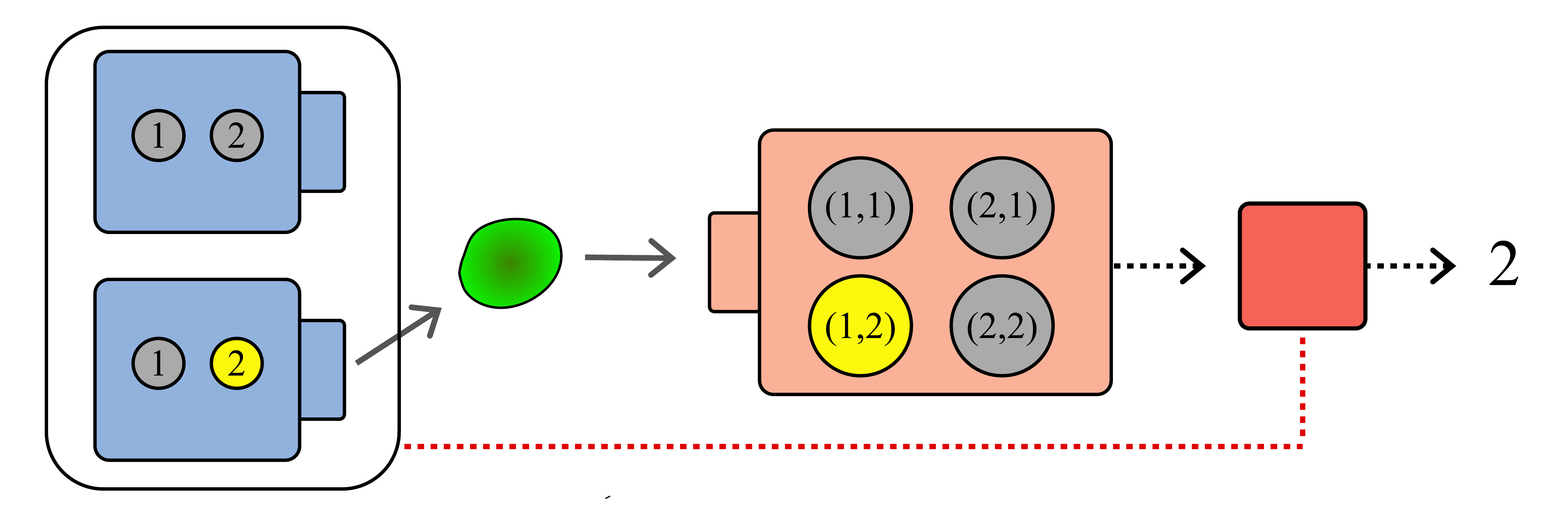}
\caption{\label{fig:gw} In the approach described in \cite{GoWe10},  Alice encodes a classical string \(x\) into a quantum state \(\varrho_{x,b}\), where \(b\) specifies one of the possible encodings Alice can choose from. Bob must determine the string \(x\) irrespectively of the encoding choosen by Alice, but Alice announces the encoding after Bob has performed his measurement.}
\end{figure}

To cast GW's approach into our framework, we choose as our label set $X$ the disjoint union of $m$ copies of $\ca{X}$, i.e., $X = \{(x,b) : x\in\ca{X},\ b\in\ca{B}\}$, and we consider the state ensemble $\en(x,b) = p(x,b)\varrho_{x,b}$. For all $b\in\ca{B}$, we denote $X_b = \{(x,b) : x\in\ca{X}\}$, so that the sets $(X_b)_{b\in\ca{B}}$ constitute a partition $\Pscr$ of $X$. Then, Bob's task of identifying the string $x$ with post-measurement information $b$ in GW's scenario is equivalent to the corresponding problem of detecting the label $(x,b)$ within our approach; in particular, $p^I_{\rm succ} = \Ppg(\en;\Pscr)$. Note that GW actually do not consider Bob's possibility to arbitrarily enlarge his classical memory (i.e., his outcome set $Y$), as they directly set $Y = \ca{X}^m$; as we have proved in Prop.~\ref{prop:standard}, this assumption is not restrictive.

However, it is important to stress that the success probabilities without post-measurement information can differ in the two approaches; indeed, $p_{\rm succ} \geq \Pg(\en)$, with strict inequality in many concrete examples. 
This is due to the fact that in GW's setting Bob is required to guess only the string $x$, while with our definition of $\Pg(\en)$ we require Bob to guess the whole label $(x,b)$, i.e., both the string $x$ and the encoding $b$ selected by Alice. 
For this reason, there are situations in which post-measurement information is useless for GW's approach, although we have $\Ppg(\en;\Pscr)>\Pg(\en)$. 
For further discussion on this point, we defer to the examples in Sec.~\ref{sec:qubit}.

\section{Post-measurement information and incompatibility of measurements}

\subsection{Compatible measurements}\label{sec:compatible}

As we see from \eqref{eq:post-C}, the guessing probability $\Ppg(\en;\Pscr;\C)$  depends only on the relabeled measurements $\pi_{1\ast}\C,\ldots,\pi_{m\ast}\C$, not on other details of $\C$.
The measurement $\pi_{\ell\ast}\C$, given by \eqref{eq:marginPOVM}, is called the \emph{$\ell$th marginal of $\C$}.
This way of writing reveals immediately the connection with the compatibility of measurements.
Namely, we recall that measurements $\N_1,\ldots,\N_m$ are called \emph{compatible} (also \emph{jointly measurable}) if there exists a measurement $\M$ on their Cartesian product outcome set such that each measurement $\N_\ell$ is the $\ell$th marginal of $\M$. 
We remark that Prop.~\ref{prop:standard} can also be extracted from the fact that the functional coexistence relation is equivalent to the compatibility relation \cite{LaPu97}. 
We further note that, by applying the equivalent definition of compatibility in terms of the post-processing preorder \cite{AlCaHeTo09,HeMiZi16}, we conclude that allowing non-deterministic post-processing functions does not increase the optimal guessing probability $\Ppg(\en;\Pscr)$.

Combining \eqref{eq:post-C}, \eqref{eq:marginPOVM} and Prop.~\ref{prop:standard} allows us to write the optimal guessing probability with post-measurement information as follows:
\begin{equation}\label{eq:max-comp}
\begin{split}
& \Ppg(\en;\Pscr) = \\
& \ \ \max \Big\{  \sum_{\ell=1}^m q(\ell) \Pg(\en_{\ell}; \N_\ell ) : \N_1,\ldots,\N_m \textrm{ compatible} \Big\}  \, .
\end{split}
\end{equation}
We now see that the difference between the guessing probabilities in prior and posterior information scenarios is that in the first one the optimization over measurements $\N_1,\ldots,\N_m$ has no restrictions, while in the second one they must be compatible.
This leads to the following conclusion.

\begin{theorem}\label{prop:equivalence}
There exist compatible optimal measurements
for the discrimination problems of state ensembles $\en_{1},\ldots,\en_{m}$ if and only if the posterior and prior information discrimination problems have the same optimal guessing probability, i.e., 
\begin{align}\label{eq:equal}
\Ppg(\en;\Pscr) = \Prg(\en;\Pscr) \, .
\end{align}
\end{theorem}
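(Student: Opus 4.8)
The plan is to read the theorem off the characterization \eqref{eq:max-comp} of $\Ppg(\en;\Pscr)$ as a maximum over tuples $(\N_1,\dots,\N_m)$ of compatible measurements, combined with the elementary upper bound \eqref{eq:before} and the fact --- announced above and established in Sec.~\ref{sec:reduction} --- that this maximum is attained. I will use freely that $\Prg(\en;\Pscr)=\sum_{\ell=1}^m q(\ell)\Pg(\en_\ell)$ with every $q(\ell)>0$, and that $\Pg(\en_\ell;\N_\ell)\le\Pg(\en_\ell)$ for every measurement $\N_\ell$ with outcome set $X_\ell$.

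For the ``only if'' direction, I would start from compatible measurements $\N_1,\dots,\N_m$ that are optimal for the respective subensembles, i.e.\ $\Pg(\en_\ell;\N_\ell)=\Pg(\en_\ell)$ for all $\ell$, choose a joint POVM $\C$ on $X_1\times\cdots\times X_m$ with marginals $\pi_{\ell\ast}\C=\N_\ell$, and substitute it into \eqref{eq:post-C}. This gives $\Ppg(\en;\Pscr)\ge\Ppg(\en;\Pscr;\C)=\sum_{\ell=1}^m q(\ell)\Pg(\en_\ell;\N_\ell)=\Prg(\en;\Pscr)$, which together with the reverse inequality \eqref{eq:before} yields the equality \eqref{eq:equal}.

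For the ``if'' direction, I would invoke the attainment of the maximum in \eqref{eq:max-comp} to obtain compatible measurements $\N_1,\dots,\N_m$ (with $\N_\ell$ on $X_\ell$) realizing $\sum_{\ell=1}^m q(\ell)\Pg(\en_\ell;\N_\ell)=\Ppg(\en;\Pscr)$. Assuming \eqref{eq:equal}, the right-hand side equals $\Prg(\en;\Pscr)=\sum_{\ell=1}^m q(\ell)\Pg(\en_\ell)$, so $\sum_{\ell=1}^m q(\ell)\bigl(\Pg(\en_\ell)-\Pg(\en_\ell;\N_\ell)\bigr)=0$ is a sum of nonnegative terms with strictly positive weights $q(\ell)$; hence each term vanishes, every $\N_\ell$ is optimal for $\en_\ell$, and the $\N_\ell$ are compatible by construction.

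The logical skeleton above is short; the one genuinely non-routine ingredient is the attainment of the maximum in \eqref{eq:max-comp}, on which the converse direction relies. I expect to handle this either by deferring to Sec.~\ref{sec:reduction}, or, if a self-contained argument is preferred at this point, by a brief compactness remark: the set of tuples of compatible measurements with the prescribed finite outcome sets is a compact subset of a finite-dimensional real vector space, and each map $\N_\ell\mapsto\Pg(\en_\ell;\N_\ell)$ is continuous, so the supremum in \eqref{eq:max-comp} is indeed a maximum.
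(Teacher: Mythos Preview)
Your proof is correct and follows essentially the same route as the paper: both directions are read off directly from \eqref{eq:max-comp}, using the bound $\Pg(\en_\ell;\N_\ell)\le\Pg(\en_\ell)$ together with $q(\ell)>0$ to force termwise equality in the ``if'' direction. Your explicit remarks on the attainment of the maximum are a welcome addition but otherwise the argument matches the paper's proof almost verbatim.
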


\begin{proof}
It follows from the definition of \(\Prg(\en;\Pscr)\) and \eqref{eq:max-comp} that,
if there exists compatible optimal measurements $\N_1,\ldots,\N_m$, then \eqref{eq:equal} holds.
Let us then assume that \eqref{eq:equal} holds. 
This means that there exist compatible measurements $\N_1,\ldots,\N_m$ such that $\sum_{\ell=1}^m q(\ell) \Pg(\en_{\ell}; \N_\ell ) =  \sum_{\ell=1}^m q(\ell) \Pg(\en_{\ell} )$. 
Since for any $\ell$ we have $\Pg(\en_{\ell}; \N_\ell ) \leq  \Pg(\en_{\ell} )$, the previous equality and $q(\ell)\neq 0$ for all $\ell$ imply $\Pg(\en_{\ell}; \N_\ell ) =  \Pg(\en_{\ell} )$.
Therefore, each $\N_\ell$ is an optimal measurement for the discrimination problem of $\en_{\ell}$. 
\end{proof}

We recall that a minimum error discrimination problem may not have a unique optimal measurement. 
For the statement of Prop.~\ref{prop:equivalence} it is enough that at least one collection of optimal measurements is made up of compatible measurements.

\subsection{Incompatible measurements}\label{sec:incompatible}

We now turn into the case when optimal measurements $\N_1,\ldots,\N_m$ for the standard minimum error discrimination for the state ensembles $\en_{1},\ldots,\en_{m}$ are incompatible.
From Theorem \ref{prop:equivalence} we conclude that in this case $\Ppg(\en;\Pscr)$ is strictly smaller than $\Prg(\en;\Pscr)$. 
However, we can still ask if the optimal solutions for the discrimination of subensembles $\en_\ell$ give some hint on the optimal solution for the post-measurement information discrimination.

A heuristic approach to the problem of state discrimination with post-measurement information relies on \eqref{eq:max-comp} and goes as follows. 
We form a noisy version $\nN_\ell$ of each optimal measurement $\N_\ell$ related to $\en_\ell$, and we add enough noise to make the measurements $\nN_1,\ldots,\nN_m$ compatible. 
Noisy versions can be, in principle, any measurements that are compatible but are approximating the optimal measurements $\N_1,\ldots,\N_m$ reasonably well. 
Bob then performs a joint measurement of $\nN_1,\ldots,\nN_m$ and from here on out, he follows the same procedure as in the case of compatible measurements.
One would expect the guessing probability to be relatively good if $\nN_1,\ldots,\nN_m$ are good approximations of $\N_1,\ldots,\N_m$.

One type of noisy version of a measurement $\N_\ell$ is given by the mixture
\begin{equation}\label{eq:noisy}
\nN_\ell(x) = t_\ell \N_\ell(x) + (1-t_\ell) \nu_\ell(x) \id \, , 
\end{equation}
where $\nu_\ell$ is a probability distribution and $t_\ell\in [0,1]$ is a mixing parameter. 
We then have
\begin{equation}\label{eq:noisy-guessing}
\Pg(\en_\ell;\nN_\ell) \geq  t_\ell \Pg(\en_\ell)  \, .
\end{equation}
One would aim to choose each mixing parameter $t_\ell$ as close to $1$ as possible to make $\nN_\ell$ a good approximation of $\N_\ell$, but the requirement that $\nN_1,\ldots,\nN_m$ must be compatible limits the region of the allowed tuples $(t_1,\ldots,t_m)$. 
The set of all tuples $(t_1,\ldots,t_m)$ that make the mixtures \eqref{eq:noisy} compatible for some choices of $\nu_1,\ldots,\nu_m$ is called the \emph{joint measurability region of $\N_1,\ldots,\N_m$} \cite{BuHeScSt13}, and we denote it as $J(\N_1,\ldots,\N_m)$. 
Further, the greatest number $t$ such that $(t,\ldots,t) \in J(\N_1,\ldots,\N_m)$ is called the \emph{joint measurability degree of $\N_1,\ldots,\N_m$} \cite{HeScToZi14}, and we denote it as $\jmd(\N_1,\ldots,\N_m)$. 

The choice of the most favorable tuple $(t_1,\ldots,t_m)\in J(\N_1,\ldots,\N_m)$ for the discrimination with post-measurement information depends on the probability distribution $q$ and on the optimal guessing probabilities $\Pg(\en_\ell)$. 
Starting from \eqref{eq:max-comp} and using \eqref{eq:noisy-guessing}, we obtain a lower bound
\begin{align*}
& \Ppg(\en;\Pscr) \\
& \geq \max \Big\{  \sum_{\ell=1}^m t_\ell q(\ell) \Pg(\en_{\ell}) : \\
& \qquad \qquad \qquad \qquad \quad (t_1,\ldots,t_m) \in J(\N_1,\ldots,\N_m) \Big\} \\
& \geq \jmd(\N_1,\ldots,\N_m) \cdot \sum_{\ell=1}^m q(\ell) \Pg(\en_{\ell}) \\
& = \jmd(\N_1,\ldots,\N_m) \cdot \Prg(\en_{\ell};\Pscr)  \, .
\end{align*}
The joint measurability degree of a set of observables is one if and only if the observables are compatible, therefore the obtained inequality can be taken as quantitative addition to Theorem \ref{prop:equivalence}. 

To derive another related inequality, we consider noisy versions of the form
\begin{equation}\label{eq:noisy-uniform}
\nN_\ell(x) = t_\ell \N_\ell(x) + (1-t_\ell) \,\frac{1}{n_\ell} \id \, , 
\end{equation}
where $n_\ell$ is the number of elements of $X_\ell$. 
Compared to the more general form \eqref{eq:noisy}, the added noise is here given by a uniform probability distribution. 
We denote by $J_u(\N_1,\ldots,\N_m)$ and $\jmdu(\N_1,\ldots,\N_m)$ the analogous objects as $J(\N_1,\ldots,\N_m)$ and $\jmd(\N_1,\ldots,\N_m)$, but where the added noise is given by uniform probability distributions.   
Clearly, $\jmdu(\N_1,\ldots,\N_m) \leq \jmd(\N_1,\ldots,\N_m)$, but the benefit for the current task is that now we can calculate the exact relation between $\Pg(\en_\ell;\nN_\ell)$ and $\Pg(\en_\ell)$. 
Namely, the bound \eqref{eq:noisy-guessing} is replaced by
\begin{equation}\label{eq:noisy-guessing-u}
\Pg(\en_\ell;\nN_\ell) =  t_\ell \Pg(\en_\ell) + (1-t_\ell) \,\frac{1}{n_\ell}  \, ,
\end{equation}
and the additional second term may improve the earlier bounds. 
For example, in the special case when $\Pg(\en_\ell)=1$ and $n_\ell \equiv n$ for each $\ell=1,\ldots,m$, we get
\begin{align}\label{eq:jmd-bound}
\Ppg(\en;\Pscr) \geq \frac{1}{n} + \frac{n-1}{n} \cdot \jmdu(\N_1,\ldots,\N_m)  \, . 
\end{align}
Similar lower bounds can be calculated in other cases. 

\subsection{Approximate cloning strategy}\label{sec:cloning}

Approximate cloning device is, generally speaking, a physically realizable map that makes several approximate copies from an unknown quantum state. 
One such device is Keyl-Werner cloning device \cite{Werner98,KeWe99}, which takes an unknown state $\varrho$ as input and outputs $m$ approximate copies $\tilde{\varrho}$ of the form 
\begin{equation*}
\tilde{\varrho} = c_{m,d}\,\varrho + (1-c_{m,d} )\, \frac{1}{d} \,\id \, , \qquad c_{m,d} = \frac{m+d}{m(1+d)} \, .
\end{equation*} 
This device is known to be optimal if the quality of single clones is quantified as their fidelity with respect to the original state.

In the current scenario of state discrimination with post-measurement information, we can use an approximate cloning device in the following way (see Fig.~\ref{fig:cloning}).
Bob, after receiving a quantum system from Alice, copies the unknown state approximatively into $m$ systems. 
For each copy, Bob performs the measurement $\N_\ell$ that optimally discriminates the subensemble $\en_\ell$. 
Then, after Alice announces the index $\ell$ of the correct subset $X_{\ell}$ where the label was from, Bob chooses his guess accordingly.

This cloning strategy is rarely optimal (see examples in Sec.~\ref{sec:qubit}), but it gives a non-trivial lower bound for the guessing probability $\Ppg(\en;\Pscr)$.
For instance, if the prior probability distribution $p$ is uniform, the cloning strategy leads to the lower bound
\begin{align}\label{eq:cloning}
\Ppg(\en;\Pscr)
 \geq c_{m,d} \, \Prg(\en;\Pscr) + 
(1-c_{m,d}) \frac{m}{N} \, ,
\end{align}
where $m$ is the number of blocks in the partition and $N=n_1 + \cdots + n_m$ is the total size of the index set. 

\begin{figure}
\includegraphics[width=9cm]{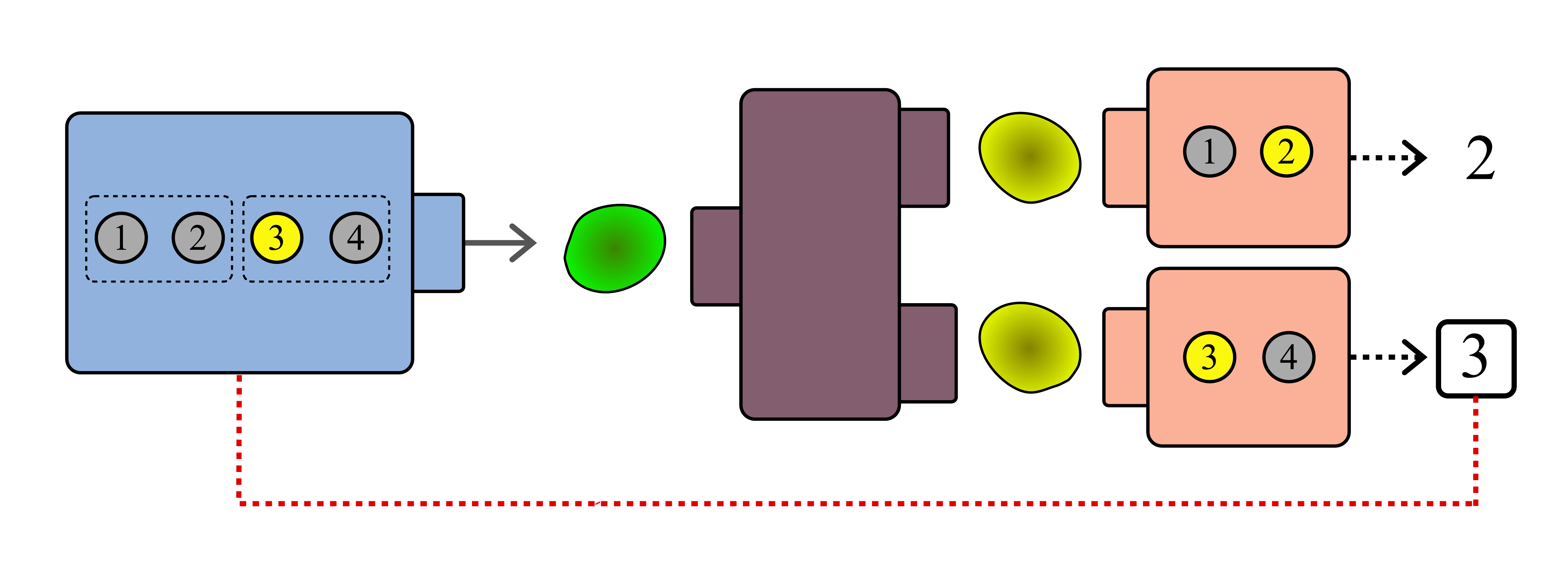}
\caption{\label{fig:cloning}Approximate cloning can be used to obtain lower bounds for the post-information guessing probability.  Alice sends the state \(\varrho_x\) from the subsensemble \(\en_\ell\) to Bob. He then makes an approximate cloning of \(\varrho_x\) into as many copies as the number of subensembles, and, on each approximate copy, he performs the  measurement  that optimally discriminates  the corresponding  subensemble. Finally, after Alice announces the index \(\ell\), Bob chooses his guess accordingly.}
\end{figure}

We can also think the approximate cloning in the Heisenberg picture, and looking in that way the Keyl-Werner cloning device transforms each measurement $\N_\ell$ into
\begin{equation}\label{eq:cloning-obs}
\nN_\ell(x) = c_{m,d} \N_\ell (x) + (1-c_{m,d}) \tr{\N_\ell(x)} \, \frac{1}{d} \,\id \, .
\end{equation}
From this point of view, the approximative cloning strategy is just a particular instance of the noisy joint measurement strategy described in Sec.~\ref{sec:incompatible}; the lower bound \eqref{eq:cloning} then follows just by inserting \eqref{eq:cloning-obs} into the right hand side of \eqref{eq:max-comp}. 
The bound \eqref{eq:cloning} is useful as it is universal, in the sense that it does not depend on any details of the optimal measurements $\N_1,\ldots,\N_m$.

\section{Methods to calculate the optimal guessing probability}

\subsection{Reduction to usual state discrimination problem}\label{sec:reduction}

It was noted in \cite{GoWe10} that the state discrimination with post-measurement information problem can be related to a suitable standard state discrimination problem. 
Here we provide a slightly different viewpoint on this connection. 

As before, we consider a state ensemble $\en$ with label set $X$, and a partition $\Pscr = (X_\ell)_{\ell\in I_m}$ of $X$ into $m$ nonempty disjoint subsets.
As shown in Sec.~\ref{sec:marginal}, in order to maximize the posterior information guessing probability $\Ppg(\en;\Pscr;\M,(f_\ell)_{\ell\in I_m})$ over all measurements $\M$ and relabeling functions $f_1,\ldots,f_m$, it is enough to consider all measurements $\C$ with the Cartesian product outcome space and use the fixed post-processings $\pi_1,\ldots,\pi_m$. 
It turns out that, up to a constant factor, the guessing probability $\Ppg(\en;\Pscr;\C)$ is the same as the guessing probability for a certain specific state ensemble in the standard state discrimination scenario using the same measurement $\C$.
To explain the details of this claim, we define an auxiliary state ensemble $\enf$ having the Cartesian product $X_1\times\cdots\times X_m$ as its label set, and given by
\begin{equation}\label{eq:assisting}
\enf(x_1,\ldots,x_m) =\frac{1}{\Delta} \sum_{\ell=1}^m \en(x_\ell) = \frac{1}{\Delta} \sum_{\ell=1}^m q(\ell) \en_\ell(x_\ell) \, ,
\end{equation}
where the probability $q$ and the state ensembles $\en_\ell$ are defined in \eqref{eq:q}, \eqref{eq:E_l}, and the numerical factor $\Delta$ is
\begin{align}
\Delta \equiv \Delta(q;n_1,\ldots,n_m) = n_1\cdots n_m \sum_{\ell=1}^m \frac{q(\ell)}{n_\ell} \, .
\end{align}
(We recall that $n_\ell$ denotes the number of labels in $X_\ell$.)
The state ensemble $\enf$ has $n_1\cdots n_m$ labels and its states are convex combinations of states from different subensembles $\en_{\ell}$.  
Starting from \eqref{eq:post-C}, a direct calculation gives
\begin{align*}
& \Ppg(\en;\Pscr;\C) = \sum_{\ell=1}^m q(\ell) \sum_{x_\ell \in X_\ell} \tr{\en_\ell(x_\ell ) \pi_{\ell\ast} \C (x_\ell)}\\
& \quad = \sum_{\ell=1}^m q(\ell) \sum_{x_1\in X_1,\ldots,x_m\in X_m} \tr{\en_\ell(x_\ell ) \C(x_1,\ldots,x_m) }\\
& \quad = \sum_{x_1\in X_1,\ldots,x_m\in X_m} \tr{\sum_{\ell=1}^m q(\ell) \en_\ell(x_\ell ) \C(x_1,\ldots,x_m) }\\
& \quad = \Delta \cdot \Pg(\enf;\C) \, .
\end{align*}
The factor $\Delta$ is required as the state ensemble $\enf$ must be normalized, i.e.,
\begin{equation}
\sum_{x_1\in X_1,\ldots,x_m\in X_m} \tr{\enf(x_1,\ldots,x_m)} = 1 \, .
\end{equation}

As mentioned in the Sec.~\ref{sec:scenario},  it is known that the standard discrimination guessing probability \(\Pg(\enf;\C)\) always attains the maximum. From the previous connection we can conclude that the same holds for the post-measurement information problem.

The above discussion is summarized in the following result.
\begin{theorem}\label{prop:equiv}
The posterior information guessing probability $\Ppg(\en;\Pscr;\C)$ attains its maximum value when $\C$ is the optimal measurement for the standard discrimination problem of the state ensemble $\enf$ in \eqref{eq:assisting}.
The optimal guessing probabilities are related via the equation
\begin{equation*}
\Ppg(\en;\Pscr) = \Delta(q;n_1,\ldots,n_m) \cdot \Pg(\enf) \,.
\end{equation*}
\end{theorem}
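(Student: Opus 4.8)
The plan is to establish the pointwise identity $\Ppg(\en;\Pscr;\C) = \Delta\cdot\Pg(\enf;\C)$ for every measurement $\C$ with outcome set $X_1\times\cdots\times X_m$, and then to optimise over $\C$. I would start from the expression \eqref{eq:post-C} for $\Ppg(\en;\Pscr;\C)$ and substitute the definition \eqref{eq:marginPOVM} of the marginal $\pi_{\ell\ast}\C$; this turns each inner sum over $x_\ell\in X_\ell$ into a sum over the whole product set $X_1\times\cdots\times X_m$, with $\C(x_1,\ldots,x_m)$ traced against $\en_\ell(x_\ell)$. Interchanging the two finite sums — the one over $\ell\in I_m$ and the one over $(x_1,\ldots,x_m)$ — lets me collect the operator $\sum_{\ell=1}^m q(\ell)\,\en_\ell(x_\ell)$ inside a single trace against $\C(x_1,\ldots,x_m)$. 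By the definition \eqref{eq:assisting} this operator is exactly $\Delta\cdot\enf(x_1,\ldots,x_m)$, so the whole expression equals $\Delta\sum_{x_1,\ldots,x_m}\tr{\enf(x_1,\ldots,x_m)\,\C(x_1,\ldots,x_m)}=\Delta\cdot\Pg(\enf;\C)$.

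Next I would verify that $\enf$ is genuinely a normalised state ensemble, which also pins down the constant $\Delta$: expanding $\enf$ and using $\sum_{x_\ell\in X_\ell}\tr{\en_\ell(x_\ell)}=1$ for each $\ell$, the sum $\sum_{x_1,\ldots,x_m}\tr{\sum_\ell q(\ell)\en_\ell(x_\ell)}$ collapses to $\sum_\ell q(\ell)\prod_{k\neq\ell}n_k = n_1\cdots n_m\sum_\ell q(\ell)/n_\ell$, which is precisely $\Delta$; dividing by $\Delta$ yields $\sum_{x_1,\ldots,x_m}\tr{\enf(x_1,\ldots,x_m)}=1$, so $\Pg(\enf)$ is well defined and $\Delta$ is the unique normalising factor making the reduction work.

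Finally I would take the maximum over $\C$ on both sides of $\Ppg(\en;\Pscr;\C)=\Delta\cdot\Pg(\enf;\C)$. By Proposition~\ref{prop:standard}, every guessing strategy $(\M,(f_\ell)_{\ell\in I_m})$ is matched by some Cartesian-product measurement $\C$ with the same guessing probability, so $\max_\C\Ppg(\en;\Pscr;\C)=\Ppg(\en;\Pscr)$. Since the standard discrimination maximum $\Pg(\enf)=\max_\C\Pg(\enf;\C)$ is attained (by the compactness argument recalled in Sec.~\ref{sec:scenario}), it follows that $\Ppg(\en;\Pscr)=\Delta\cdot\Pg(\enf)$ and, moreover, that any measurement optimal for the standard discrimination of $\enf$ is optimal in the post-measurement scenario.

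I do not expect a genuine obstacle: the core of the argument is a rearrangement of finite sums together with a normalisation computation. The points that require care are keeping the index ranges straight when rewriting the marginal $\pi_{\ell\ast}\C$ as a sum over the full product set, and invoking Proposition~\ref{prop:standard} — rather than arguing directly — to justify that restricting attention to Cartesian-product measurements with the fixed post-processings $\pi_1,\ldots,\pi_m$ entails no loss of generality.
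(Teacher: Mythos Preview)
Your proposal is correct and follows essentially the same route as the paper: establish the pointwise identity $\Ppg(\en;\Pscr;\C)=\Delta\cdot\Pg(\enf;\C)$ by expanding the marginals and interchanging the finite sums, then pass to the maximum using Proposition~\ref{prop:standard} and the existence of an optimal measurement for the standard problem. Your additional verification of the normalisation of $\enf$ is a welcome clarification but does not deviate from the paper's argument.
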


As an illustration, suppose $X$ has $2n$ elements and it is partitioned into $X_1$ and $X_2$, both having $n$ elements, and that the prior probability $p$ is the uniform distribution on $X$.
Then
\begin{align*} 
\enf(x_1,x_2) = \frac{1}{n^2} \cdot \frac{1}{2} (\varrho_{x_1} + \varrho_{x_2})\, .
\end{align*}
We thus see that the state ensemble $\enf$ contains all possible equal mixtures of states from $\en_{1}$ and $\en_2$.

\subsection{Optimal guessing probability in the usual state discrimination problem}\label{sec:methods}

We have just seen that it is always possible to transform the problem of state discrimination with post-measurement information to a usual minimum error state discrimination problem. For this reason, in this section we consider a class of cases where for a single state ensemble $\en$ one can analytically calculate the optimal guessing probability $\Pg(\en)$ as well as the optimal measurements. 
This covers the cases that we will present as examples in Secs.~\ref{sec:qubit} and \ref{sec:mub}.

The main result is the following observation.

\begin{proposition}\label{prop:easy}
Suppose $\en$ is a state ensemble with label set $X$. For all $x\in X$, denote by $\lambda(x)$ the largest eigenvalue of $\en(x)$, and by $\Pi(x)$ the orthogonal projection onto the $\lambda(x)$-eigenspace of $\en(x)$. Define
\begin{equation*}
\lambda_\en = \max_{x\in X} \lambda(x) \, , \qquad X_\en = \{ x \in X: \lambda(x) = \lambda_\en \} \, .
\end{equation*}
Then, if there exists $\mu\in\real$ such that
\begin{equation}\label{eq:easy_condition_1}
\sum_{x\in X_\en} \Pi(x) = \mu\id \,,
\end{equation}
we have the following consequences:
\begin{enumerate}[(a)]
\item $\mu = \tfrac{1}{d} \sum_{x\in X_\en} \rank{\Pi(x)}$;\label{item:(mu)prop_easy}
\item $\Pg(\en) = d\lambda_\en$;\label{item:(lambda)prop_easy}
\item a measurement $\M_0$ attaining the maximum guessing probability $\Pg(\en)$ is\label{item:(M0)prop_easy}
\begin{equation}\label{eq:M0}
\M_0(x) = \begin{cases}
\mu^{-1}\Pi(x) & \text{ if } x\in X_\en \\
0 & \text{ if } x\notin X_\en \,;
\end{cases}
\end{equation}
\item a measurement $\M$ attains the maximum guessing probability $\Pg(\en)$ if and only if\label{item:(max)prop_easy}
\begin{enumerate}[(i)]
\item $\M(x) \leq \Pi(x)$ for all $x\in X_\en$
\item $\M(x) = 0$ for all $x\notin X_\en$.
\end{enumerate}
\end{enumerate}
\end{proposition}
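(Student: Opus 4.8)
The plan is to reduce everything to the single pointwise operator bound $\en(x)\le\lambda(x)\id$, valid because $\en(x)\ge 0$ has largest eigenvalue $\lambda(x)$, together with the identity $\sum_{x\in X}\tr{\M(x)}=\tr{\id}=d$ satisfied by every measurement $\M$ with outcome set $X$. Part (a) is then immediate: take the trace of the hypothesis \eqref{eq:easy_condition_1} and use $\tr{\Pi(x)}=\rank{\Pi(x)}$. Note this forces $\mu d=\sum_{x\in X_\en}\rank{\Pi(x)}\ge\abs{X_\en}>0$, so $\mu>0$ and the operator $\M_0$ of \eqref{eq:M0} is positive; moreover $\sum_x\M_0(x)=\mu^{-1}\sum_{x\in X_\en}\Pi(x)=\id$ by \eqref{eq:easy_condition_1}, so $\M_0$ is a genuine measurement.

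For (b) and (c) I would argue simultaneously. For an arbitrary measurement $\M$ one estimates $\Pg(\en;\M)=\sum_x\tr{\en(x)\M(x)}\le\sum_x\lambda(x)\tr{\M(x)}\le\lambda_\en\sum_x\tr{\M(x)}=d\lambda_\en$, so $\Pg(\en)\le d\lambda_\en$. To see the bound is attained by $\M_0$, observe that for $x\in X_\en$ the range of $\Pi(x)$ is the $\lambda_\en$-eigenspace of $\en(x)$, hence $\en(x)\Pi(x)=\lambda_\en\Pi(x)$ and $\tr{\en(x)\Pi(x)}=\lambda_\en\rank{\Pi(x)}$; summing and using (a) yields $\Pg(\en;\M_0)=\mu^{-1}\lambda_\en\sum_{x\in X_\en}\rank{\Pi(x)}=\mu^{-1}\lambda_\en\,\mu d=d\lambda_\en$. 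Thus $\Pg(\en)=d\lambda_\en$ and $\M_0$ is optimal.

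For the characterization (d) I would trace through the equality cases of the two-step estimate above. Optimality of $\M$ forces $\sum_x(\lambda_\en-\lambda(x))\tr{\M(x)}=0$, and since every summand is non-negative while $\lambda_\en-\lambda(x)>0$ for $x\notin X_\en$, this gives $\M(x)=0$ for all $x\notin X_\en$, which is (ii). It also forces $\tr{(\lambda(x)\id-\en(x))\M(x)}=0$ for every $x\in X_\en$; as $\lambda(x)\id-\en(x)=\lambda_\en\id-\en(x)\ge 0$ and $\M(x)\ge 0$, the elementary fact that $\tr{AB}=0$ with $A,B\ge 0$ implies $AB=0$ gives $(\lambda_\en\id-\en(x))\M(x)=0$, hence $\ran{\M(x)}\subseteq\ker{(\lambda_\en\id-\en(x))}=\ran{\Pi(x)}$. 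The delicate step — the one I expect to be the only real obstacle — is upgrading this range condition to the operator inequality $\M(x)\le\Pi(x)$ of (d)(i), since the range condition alone is strictly weaker. Here I would invoke the extra fact that $\M(x)\le\id$ for every element of a measurement (because $\M(x)=\id-\sum_{y\neq x}\M(y)$ with all terms positive), rewrite the range condition as $\M(x)=\Pi(x)\M(x)\Pi(x)$ (valid for positive $\M(x)$), and compress: $\M(x)=\Pi(x)\M(x)\Pi(x)\le\Pi(x)\id\Pi(x)=\Pi(x)$. The converse implication in (d) is obtained by running the computation backwards: if $\M$ is a measurement with $\M(x)=0$ off $X_\en$ and $0\le\M(x)\le\Pi(x)$ on $X_\en$, then $\ran{\M(x)}\subseteq\ran{\Pi(x)}$ gives $\en(x)\M(x)=\lambda_\en\M(x)$, whence $\Pg(\en;\M)=\lambda_\en\sum_{x\in X_\en}\tr{\M(x)}=\lambda_\en\tr{\id}=d\lambda_\en$; in particular $\M_0$ satisfies (i)--(ii), which re-confirms (c). The remaining checks — positivity and normalization of $\M_0$, and the $\tr{AB}=0$ lemma — are routine.
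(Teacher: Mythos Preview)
Your proof is correct and essentially matches the paper's: both use the two-step estimate $\Pg(\en;\M)\le\sum_x\lambda(x)\tr{\M(x)}\le d\lambda_\en$ and trace through the same equality cases, with the paper packaging the passage from $\tr{(\lambda_\en\id-\en(x))\M(x)}=0$ to $\M(x)\le\Pi(x)$ (via the range inclusion and $\M(x)\le\id$) into a separate auxiliary lemma while you argue it inline. The only minor organizational difference is that you verify $\Pg(\en;\M_0)=d\lambda_\en$ by direct computation, whereas the paper instead checks that $\M_0$ satisfies the optimality conditions (i)--(ii) of (d).
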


In the following we provide a simple proof of Prop.~\ref{prop:easy}, relying on Lemma \ref{lem:easy} given after that. 
We remark that an alternative longer proof can also be given by making use of the optimality conditions \cite[Eq.~(III.29)]{YuKeLa75}, \cite[Theorem II.2.2]{Holevo78}, which follow from a semidefinite programming argument (see also \cite{ElMeVe03} for a more recent account of these results).

\begin{proof}
We assume that \eqref{eq:easy_condition_1} holds for some $\mu\in\real$. 
By taking the trace of both sides of \eqref{eq:easy_condition_1}, we get $\mu = \tfrac{1}{d} \sum_{x\in X_\en} \rank{\Pi(x)}$. This proves \eqref{item:(mu)prop_easy}.

For any measurement $\M$ on $X$, we have
\begin{align*}
\Pg(\en;\M) & = \sum_{x \in X} \tr{\en(x) \M(x)} \leq \sum_{x \in X} \lambda(x) \tr{\M(x)} \\
& \leq \lambda_\en \sum_{x \in X} \tr{\M(x)} = \lambda_\en \tr{\id} = d\lambda_\en \,. 
\end{align*}
The first inequality follows from Lemma \ref{lem:easy} just below, which also implies that the equality is attained if and only if $\M(x)\leq\Pi(x)$ for all $x\in X$. The second inequality is trivial, and it becomes an equality if and only if $\M(x) = 0$ for all $x\notin X_\en$. In summary, $\Pg(\en;\M) \leq d\lambda_\en$, with equality if and only if the measurement $\M$ satisfies conditions (i) and (ii) of \eqref{item:(max)prop_easy}.

Since $\Pi(x)\leq \mu\id$ for any $x\in X_\en$ by \eqref{eq:easy_condition_1}, we must have $\mu\geq 1$.
Hence, $\M_0(x)\leq\Pi(x)$ for all $x\in X_\en$.
Moreover, $\M_0(x) = 0$ for all $x\notin X_\en$ by the definition of $\M_0$. By the discussion in the last paragraph, it follows that $\M_0$ is optimal, and $\Pg(\en) = \Pg(\en ; \M_0) = d\lambda_\en$. This proves \eqref{item:(lambda)prop_easy} and \eqref{item:(M0)prop_easy}. Since any optimal measurement $\M$ must then be such that $\Pg(\en ; \M) = d\lambda_\en$, also \eqref{item:(max)prop_easy} follows.
\end{proof}

\begin{lemma}[for Prop.~\ref{prop:easy}]\label{lem:easy}
Let $A,B\in\lh$ with $A\geq 0$ and $0\leq B\leq\id$. 
Let $\lambda$ be the largest eigenvalue of $A$ and $\Pi$ the associated eigenprojection. Then,
\begin{equation*}
\tr{AB} \leq \lambda \, \tr{B} \,,
\end{equation*}
and the equality is attained if and only if $B\leq\Pi$.
\end{lemma}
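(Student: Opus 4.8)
The plan is to establish the inequality by diagonalizing $A$ and then pin down the equality case by a careful analysis of where the slack in the two estimates disappears. First I would write $A = \sum_j \lambda_j \Pi_j$ in its spectral decomposition, with $\lambda = \lambda_1 > \lambda_2 > \cdots$ the distinct eigenvalues in decreasing order and $\Pi = \Pi_1$ the top eigenprojection, so that $\sum_j \Pi_j = \id$. Then
\begin{equation*}
\tr{AB} = \sum_j \lambda_j \tr{\Pi_j B} \leq \lambda \sum_j \tr{\Pi_j B} = \lambda \tr{B} \, ,
\end{equation*}
where the inequality uses $\lambda_j \leq \lambda$ together with $\tr{\Pi_j B} = \tr{\Pi_j B \Pi_j} \geq 0$, the latter because $B \geq 0$ forces $\Pi_j B \Pi_j \geq 0$. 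This gives the stated bound with essentially no effort.

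For the equality case, the ``if'' direction is quick: if $B \leq \Pi$ then $\id - B \geq \id - \Pi$, and since $A \geq 0$ annihilates the range of $\id - \Pi$ one gets $\tr{A(\id - B)} \leq \tr{A(\id - \Pi)} = 0$; combined with $\tr{A(\id-B)} = \tr{A} - \tr{AB}$ and $\tr{A\Pi} = \lambda\,\rank{\Pi} = \lambda\,\tr{\Pi}$, rearranging yields $\tr{AB} \geq \tr{A\Pi} \ge \lambda\tr B$... actually the cleanest route is: $\lambda\tr{B} - \tr{AB} = \tr{(\lambda\id - A)B}$, and since $\lambda\id - A \geq 0$ with kernel exactly $\ran\Pi$, while $0 \le B \le \Pi$ means $B$ is supported on $\ran\Pi$, the product $(\lambda\id - A)B$ has zero trace. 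I would present this ``$\tr{(\lambda\id-A)B} = 0$'' reformulation from the start, as it handles both directions symmetrically.

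For the ``only if'' direction, suppose $\tr{(\lambda\id - A)B} = 0$. Both factors are positive (the first by definition of $\lambda$, the second by hypothesis on $B$), so $\tr{(\lambda\id - A)^{1/2} B (\lambda\id - A)^{1/2}} = 0$ forces $(\lambda\id - A)^{1/2} B^{1/2} = 0$, hence $(\lambda\id - A) B = 0$, i.e. $\ran B \subseteq \ker(\lambda\id - A) = \ran \Pi$. Therefore $\Pi B = B$, and then $B = \Pi B \Pi \leq \Pi \id \Pi = \Pi$ using $0 \le B \le \id$. I expect the main (though still minor) obstacle to be stating the trace-zero-implies-operator-zero step cleanly: one must invoke that $\tr{CDC^*} = 0$ with $D \geq 0$ implies $D^{1/2}C^* = 0$, which is just the faithfulness of the trace on positive operators applied to $(D^{1/2}C^*)^*(D^{1/2}C^*)$. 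Everything else is bookkeeping with the spectral decomposition, and no step requires the finite-dimensionality beyond what is already assumed.
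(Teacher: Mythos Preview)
Your proof is correct and follows essentially the same route as the paper's: both rewrite the gap as $\lambda\,\tr{B}-\tr{AB}=\tr{(\lambda\id-A)B}$, use positivity of both factors to get the inequality, characterize equality by $(\lambda\id-A)B=0$, and then translate this into $\ran B\subseteq\ran\Pi$ and hence $B=\Pi B\Pi\leq\Pi$. The only difference is that the paper outsources the step ``$\tr{PQ}=0$ with $P,Q\geq 0$ implies $PQ=0$'' to a cited lemma, while you supply the square-root argument explicitly.
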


\begin{proof}
Since $\lambda\id-A\geq 0$, we have $\lambda\tr{B} - \tr{AB} = \tr{(\lambda\id-A)B} \geq 0$, where the inequality follows from \cite[Lemma 2]{YuKeLa75}. By the same result, the equality is attained if and only if $(\lambda\id-A)B =0$, that is, $AB = \lambda B$. Note that $AB = \lambda B \ \Leftrightarrow \ \ran{B}\subseteq\ran{\Pi}$. The latter inclusion implies $\Pi B = B = B\Pi$ and then $B = \Pi B\Pi \leq \Pi\id\Pi = \Pi$. Conversely, if $B\leq\Pi$, then $\ker{\Pi}\subseteq\ker{B}$, so that $\ran{B}\subseteq\ran{\Pi}$. In conclusion, $AB=\lambda B$ if and only if $B\leq\Pi$, and this completes the proof.
\end{proof}

\begin{corollary}\label{prop:easy-2}
With the notations of Prop.~\ref{prop:easy}, suppose \eqref{eq:easy_condition_1} holds for some $\mu\in\real$ and $\rank{\Pi(x)}=1$ for all $x\in X_\en$. Then, the following facts are equivalent:
\begin{enumerate}[(i)]
\item The operators $\{\Pi(x) : x\in X_\en\}$ are linearly independent.\label{it:(i)_prop:easy-2}
\item The measurement $\M_0$ given in \eqref{eq:M0} is the unique measurement giving the maximum guessing probability $\Pg(\en)$.\label{it:(ii)_prop:easy-2}
\end{enumerate}
\end{corollary}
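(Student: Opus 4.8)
The plan is to read off the corollary directly from the classification of optimal measurements in Proposition~\ref{prop:easy}\eqref{item:(max)prop_easy}. Since $\rank{\Pi(x)}=1$ for every $x\in X_\en$, the condition $0\le\M(x)\le\Pi(x)$ forces $\M(x)=c_x\Pi(x)$ with $c_x\in[0,1]$. Hence a measurement $\M$ on $X$ is optimal if and only if $\M(x)=c_x\Pi(x)$ for $x\in X_\en$, $\M(x)=0$ for $x\notin X_\en$, and the tuple $(c_x)_{x\in X_\en}\in[0,1]^{X_\en}$ satisfies the normalization $\sum_{x\in X_\en}c_x\Pi(x)=\id$. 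In these coordinates the measurement $\M_0$ of \eqref{eq:M0} is just the tuple $c_x\equiv\mu^{-1}$, which is admissible by \eqref{eq:easy_condition_1}, so the corollary reduces to the statement that $c_x\equiv\mu^{-1}$ is the \emph{only} admissible tuple precisely when $\{\Pi(x):x\in X_\en\}$ is linearly independent.

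For the implication \eqref{it:(i)_prop:easy-2}$\Rightarrow$\eqref{it:(ii)_prop:easy-2} this is immediate: if the operators $\Pi(x)$ are linearly independent in $\lh$, the equation $\sum_{x\in X_\en}c_x\Pi(x)=\id$ has at most one real solution, and since $c_x\equiv\mu^{-1}$ is one, it is the unique one, even before imposing $c_x\in[0,1]$. Thus $\M_0$ is the unique optimal measurement. For the converse I would argue by contraposition: assuming the $\Pi(x)$ are linearly dependent, pick reals $(a_x)_{x\in X_\en}$, not all zero, with $\sum_{x\in X_\en}a_x\Pi(x)=0$; then $c_x=\mu^{-1}+\varepsilon a_x$ solves the normalization for every $\varepsilon$, and for $|\varepsilon|$ small enough it lies in $[0,1]^{X_\en}$, giving an optimal measurement different from $\M_0$ and contradicting \eqref{it:(ii)_prop:easy-2}.

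The step I expect to be the main obstacle is making the perturbation argument legitimate, namely showing that $\mu^{-1}$ is an \emph{interior} point of $[0,1]$ — equivalently $\mu>1$ — whenever the $\Pi(x)$ are dependent (recall $\mu\ge1$ was already established in the proof of Proposition~\ref{prop:easy}, so the issue is only excluding $\mu=1$). Here I would note that $\mu=1$ would give $\sum_{x\in X_\en}\Pi(x)=\id$, and sandwiching yields $\Pi(x)=\Pi(x)\big(\sum_{y\in X_\en}\Pi(y)\big)\Pi(x)=\Pi(x)+\sum_{y\neq x}\Pi(x)\Pi(y)\Pi(x)$, hence $\sum_{y\neq x}\Pi(x)\Pi(y)\Pi(x)=0$; as each summand $\Pi(x)\Pi(y)\Pi(x)=(\Pi(x)\Pi(y))(\Pi(x)\Pi(y))^{*}$ is positive semidefinite, all of them vanish, so $\Pi(x)\Pi(y)=0$ for $x\neq y$ and the $\Pi(x)$ are mutually orthogonal, in particular linearly independent — contradicting the dependence assumption. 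Therefore $\mu>1$, the perturbation stays in $[0,1]^{X_\en}$ for small $\varepsilon\neq0$, and the contrapositive is complete.
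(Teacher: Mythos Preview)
Your proof is correct and follows essentially the same route as the paper: characterize optimal measurements via Proposition~\ref{prop:easy}\eqref{item:(max)prop_easy} as $\M(x)=c_x\Pi(x)$, then argue uniqueness of the normalization solution under independence and perturb along a real dependence relation otherwise. Two minor remarks: the paper is explicit about reducing a complex dependence relation to a real one (via $\beta\mapsto\beta+\overline{\beta}$ or $\i(\beta-\overline{\beta})$), which you silently assume---perfectly fine since the $\Pi(x)$ are Hermitian; conversely, your sandwiching argument that $\mu=1$ forces pairwise orthogonality is more detailed than the paper's one-line assertion that the $\Pi(x)$ would then form an orthogonal resolution of the identity.
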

\begin{proof}
Since $\Pi(x)$ is a rank-1 orthogonal projection, any positive operator $A$ satisfying $A \leq \Pi(x)$ is a scalar multiple of $\Pi(x)$. 
Therefore, by \eqref{item:(max)prop_easy} of Prop.~\ref{prop:easy}, a measurement attains the maximum guessing probability $\Pg(\en)$ if and only if it has the form
\begin{equation*}
\M_\alpha(x) = \begin{cases}
\alpha(x)\Pi(x) & \text{ if } x\in X_\en \\
0 & \text{ if } x \notin X_\en
\end{cases}
\end{equation*}
for some function $\alpha : X_\en \to [0,1]$.\\
Since 
\begin{equation*}
\id = \sum_{x \in X} \M_\alpha(x) = \sum_{x\in X_\en} \alpha(x) \Pi(x)
\end{equation*}
and 
\begin{equation*} 
 \id = \sum_{x \in X} \M_0(x) = \sum_{x\in X_\en} \mu^{-1} \Pi(x) \,,
\end{equation*} 
linear independence of the operators $\{\Pi(x) : x\in X_\en\}$ yields $\alpha(x) = \mu^{-1}$ for all $x\in X_\en$, hence $\M_\alpha=\M_0$.\\
Conversely, if the operators $\{\Pi(x) : x\in X_\en\}$ are not linearly independent, then $\mu>1$, as otherwise they would be an orthogonal resolution of the identity, that is a contradiction. Moreover, there exists some nonzero function $\beta : X_\en\to\complex$ such that
$$
0 = \sum_{x\in X_\en} \beta(x) \Pi(x) = \sum_{x\in X_\en} \overline{\beta(x)} \Pi(x) \,.
$$
By possibly replacing $\beta$ with either $\beta+\overline{\beta}$ or $\i (\beta-\overline{\beta})$, we can assume that $\beta : X_\en\to\real$. If $\epsilon\in\real\setminus\{0\}$ is such that $\abs{\epsilon}$ is small enough, then $\alpha(x) = \mu^{-1}+\epsilon\beta(x)\in [0,1]$ for all $x\in X_\en$; hence $\M_\alpha$ is an optimal measurement with $\M_\alpha\neq\M_0$.
\end{proof}

We remark that if the rank-$1$ condition in the statement of Cor.~\ref{prop:easy-2} is dropped, then the equivalence of items \eqref{it:(i)_prop:easy-2} and \eqref{it:(ii)_prop:easy-2} is no longer true; a simple example demonstrating this fact is provided in Appendix \ref{app:one}.

\begin{corollary}\label{prop:easy-3}
Suppose $\Pscr = (X_\ell)_{\ell\in I}$ is a partition of $X$ into nonempty disjoint subsets, and define $q$ and $\en_\ell$ as in \eqref{eq:q} and \eqref{eq:E_l}. If each state ensemble $\en_\ell$ satisfies the hypothesis of Prop.~\ref{prop:easy} for all $\ell\in I$, then also $\en$ does it, and
\begin{equation*}
\Pg(\en) = \max_{\ell\in I} q(\ell) \Pg(\en_\ell) \,.
\end{equation*}
\end{corollary}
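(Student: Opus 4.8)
The plan is to derive both conclusions directly from Proposition~\ref{prop:easy}, applied first to each subensemble $\en_\ell$ and then to $\en$ itself. The starting observation is that, for $x$ lying in the unique block $X_\ell$ that contains it, \eqref{eq:E_l} gives $\en(x)=q(\ell)\,\en_\ell(x)$; since $q(\ell)>0$, the operators $\en(x)$ and $\en_\ell(x)$ have exactly the same eigenspaces, the largest eigenvalue of $\en(x)$ is $q(\ell)$ times that of $\en_\ell(x)$, and the associated eigenprojection is literally the same operator. Writing $\lambda_\ell(x)$, $\Pi_\ell(x)$, $\lambda_{\en_\ell}$, $X_{\en_\ell}$ and $\mu_\ell$ for the objects that Proposition~\ref{prop:easy} attaches to $\en_\ell$, this reads $\lambda(x)=q(\ell)\lambda_\ell(x)$ and $\Pi(x)=\Pi_\ell(x)$ for all $x\in X_\ell$.

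First I would identify $\lambda_\en$ and the set $X_\en$. Since $X$ is the disjoint union of the blocks $X_\ell$, maximizing over $x\in X$ in two stages gives $\lambda_\en=\max_{\ell\in I}q(\ell)\lambda_{\en_\ell}$; moreover $\lambda_{\en_\ell}>0$ because $\en_\ell$ is normalized, so $\lambda_\en>0$. Let $L=\{\ell\in I : q(\ell)\lambda_{\en_\ell}=\lambda_\en\}$. For $\ell\in L$ and $x\in X_\ell$, the condition $\lambda(x)=\lambda_\en$ is equivalent to $\lambda_\ell(x)=\lambda_{\en_\ell}$, i.e.\ to $x\in X_{\en_\ell}$; for $\ell\notin L$ and $x\in X_\ell$ one has $\lambda(x)=q(\ell)\lambda_\ell(x)\le q(\ell)\lambda_{\en_\ell}<\lambda_\en$, so such $x$ never belong to $X_\en$. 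Hence $X_\en=\bigcup_{\ell\in L}X_{\en_\ell}$, a disjoint union.

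Next I would verify that $\en$ satisfies the hypothesis \eqref{eq:easy_condition_1}. Combining $\Pi(x)=\Pi_\ell(x)$, the disjoint decomposition of $X_\en$, and the standing assumption $\sum_{x\in X_{\en_\ell}}\Pi_\ell(x)=\mu_\ell\id$ for each $\ell\in I$ yields
\[
\sum_{x\in X_\en}\Pi(x)\;=\;\sum_{\ell\in L}\sum_{x\in X_{\en_\ell}}\Pi_\ell(x)\;=\;\Big(\sum_{\ell\in L}\mu_\ell\Big)\id\,,
\]
so \eqref{eq:easy_condition_1} holds for $\en$ with $\mu=\sum_{\ell\in L}\mu_\ell\in\real$, and Proposition~\ref{prop:easy} applies to $\en$ as well. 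Finally, Proposition~\ref{prop:easy} applied to $\en$ and to each $\en_\ell$ gives $\Pg(\en)=d\lambda_\en$ and $\Pg(\en_\ell)=d\lambda_{\en_\ell}$, whence
\[
\Pg(\en)=d\lambda_\en=d\max_{\ell\in I}q(\ell)\lambda_{\en_\ell}=\max_{\ell\in I}q(\ell)\,d\lambda_{\en_\ell}=\max_{\ell\in I}q(\ell)\Pg(\en_\ell)\,,
\]
which is the claimed identity.

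The step that needs the most care --- and the only place where anything could go wrong --- is the bookkeeping for $X_\en$: one must use both $q(\ell)>0$ and the strict inequality $q(\ell)\lambda_{\en_\ell}<\lambda_\en$ for $\ell\notin L$ to discard the non-maximal blocks, and one must note that inside a maximal block the maximizers of $\lambda_\ell$ coincide with the $x$ satisfying $\lambda(x)=\lambda_\en$. Apart from that, the argument is a routine substitution into Proposition~\ref{prop:easy}, so I do not anticipate a genuine obstacle.
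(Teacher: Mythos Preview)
Your proof is correct and follows essentially the same approach as the paper: you identify $\lambda(x)=q(\ell)\lambda_\ell(x)$ and $\Pi(x)=\Pi_\ell(x)$ on each block, compute $\lambda_\en=\max_\ell q(\ell)\lambda_{\en_\ell}$ and $X_\en=\bigcup_{\ell\in L}X_{\en_\ell}$ with $L$ the set of maximizing indices (the paper calls it $I_0$), sum the block identities $\sum_{x\in X_{\en_\ell}}\Pi_\ell(x)=\mu_\ell\id$ over $L$ to verify \eqref{eq:easy_condition_1}, and then read off $\Pg(\en)=d\lambda_\en=\max_\ell q(\ell)\Pg(\en_\ell)$. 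If anything, your version is slightly more explicit in justifying the decomposition of $X_\en$ and in using $q(\ell)>0$.
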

\begin{proof}
Using the notations of Prop.~\ref{prop:easy} for the ensemble $\en$, and denoting by $\lam_\ell(x)$ and $\Pi_\ell(x)$ the largest eigenvalue of $\en_\ell(x)$ and the corresponding eigenprojection, we have
$$
\lam(x) = q(\ell)\lam_\ell(x) \quad \text{and} \quad \Pi(x) = \Pi_\ell(x) \quad \text{for all } x\in X_\ell \,.
$$
Setting as usual
$$
\lam_{\en_\ell} = \max_{x\in X_\ell}\lam_\ell(x) \qquad X_{\en_\ell} = \{x\in X_\ell : \lam_\ell(x) = \lam_{\en_\ell}\} \,,
$$
the hypothesis is that
$$
\sum_{x\in X_{\en_\ell}} \Pi_\ell(x) = \mu_\ell \id \quad \text{for some $\mu_\ell\in\R$ and all $\ell\in I$} \,.
$$
Then,
\begin{align*}
& \lam_\en = \max_{\ell\in I} q(\ell) \lam_{\en_\ell} \\
& X_\en = \bigcup_{\ell\in I_0} X_{\en_\ell} \quad \text{where} \quad I_0 = \{\ell\in I : q(\ell) \lam_{\en_\ell} = \lam_\en\} \\
& \sum_{x\in X_\en} \Pi(x) = \sum_{\ell\in I_0} \, \sum_{x\in X_{\en_\ell}} \Pi_\ell(x) = \sum_{\ell\in I_0} \mu_\ell \id \,.
\end{align*}
Therefore, the state ensemble $\en$ satisfies condition \eqref{eq:easy_condition_1}. In paticular, by \eqref{item:(lambda)prop_easy} of Prop.~\ref{prop:easy},
$$
\Pg(\en) = d\lam_\en = d \max_{\ell\in I} q(\ell) \lam_{\en_\ell} = \max_{\ell\in I} q(\ell) \Pg(\en_\ell) \,.
$$
\end{proof}

A situation where Prop.~\ref{prop:easy} is applicable occurs, for instance, when a state ensemble $\en$ is invariant under an irreducible projective unitary representation of some symmetry group. 
More precisely, suppose $G$ is a finite group, and let $U$ be a projective unitary representation of $G$ on $\hi$. 
We say that a state ensemble $\en$ is $U$-\emph{invariant} if $U(g)\en(X)U(g)^* = \en(X)$ for all $g\in G$, where $\en(X) = \{\en(x) : x\in X\}$. 
The definition of $U$-invariance for a state ensemble was first given in \cite{Holevo73}, where an action of the group $G$ on the index set $X$ was also required; see also \cite{ElMeVe04}.
Further, we call a state ensemble $\en$ {\em injective} if it is injective as a function, i.e., $\en(x)\neq\en(y)$ for $x\neq y$.

\begin{proposition}\label{prop:cov}
Suppose the projective unitary representation $U$ is irreducible, and let $\en$ be an injective and $U$-invariant state ensemble. 
Then, condition \eqref{eq:easy_condition_1} holds for some $\mu\in\real$.
\end{proposition}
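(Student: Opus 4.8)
The plan is to use $U$-invariance to turn the symmetry of the ensemble into an honest action of $G$ on the label set $X$, check that the eigenvalues and eigenprojections transform equivariantly, and then conclude with Schur's lemma. First I would define the action: since $\en$ is injective, $\en\colon X\to\en(X)$ is a bijection, and since conjugation by $U(g)$ maps $\en(X)$ onto itself by $U$-invariance, for each $g\in G$ and $x\in X$ there is a \emph{unique} label $g\cdot x\in X$ with $U(g)\,\en(x)\,U(g)^\ast = \en(g\cdot x)$. This is a genuine group action: conjugating by $U(g)$ and then by $U(h)$ is conjugation by $U(h)U(g)$, which differs from $U(hg)$ only by a scalar phase that cancels under conjugation, so $h\cdot(g\cdot x)=(hg)\cdot x$.

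Next I would record the two equivariance facts. Conjugation by a unitary preserves spectra, so $\lambda(g\cdot x)=\lambda(x)$ for all $g,x$; hence $\lambda_\en$ is unchanged and $X_\en$ is mapped onto itself by every $g\in G$. Moreover, conjugation by $U(g)$ carries the $\lambda(x)$-eigenspace of $\en(x)$ onto the $\lambda(x)$-eigenspace of $U(g)\en(x)U(g)^\ast=\en(g\cdot x)$, i.e.\ onto the $\lambda(g\cdot x)$-eigenspace of $\en(g\cdot x)$; therefore $U(g)\,\Pi(x)\,U(g)^\ast=\Pi(g\cdot x)$.

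Finally I would set $T=\sum_{x\in X_\en}\Pi(x)$ and compute $U(g)\,T\,U(g)^\ast=\sum_{x\in X_\en}\Pi(g\cdot x)=\sum_{y\in X_\en}\Pi(y)=T$, the last equality because $g$ permutes $X_\en$. Thus $T$ commutes with $U(g)$ for all $g\in G$. Since $U$ is irreducible and $\hi$ is finite dimensional over $\complex$, $T$ has an eigenvalue $\mu$ whose eigenspace is $U$-invariant and hence equal to $\hi$, so $T=\mu\id$; because $T$ is a nonzero positive operator, $\mu$ is a strictly positive real number. This is exactly condition \eqref{eq:easy_condition_1}.

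The one point that genuinely needs care is the well-definedness of the $G$-action, which is precisely where injectivity of $\en$ is used: without injectivity the label $g\cdot x$ need not be unique, and then $X_\en$ together with the family $(\Pi(x))_{x\in X_\en}$ could fail to be permuted by $G$, breaking the Schur argument. Everything else is routine, being a direct application of Schur's lemma for projective unitary representations.
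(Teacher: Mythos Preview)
Your proof is correct and follows essentially the same approach as the paper: use injectivity and $U$-invariance to define a $G$-action on $X$, observe that $\lambda$ is constant and $\Pi$ is equivariant along orbits so that $X_\en$ is $G$-stable, and then apply Schur's lemma to $\sum_{x\in X_\en}\Pi(x)$. Your version simply spells out a few routine verifications (the group-action axiom, why the projective phase is harmless, why $\mu>0$) that the paper leaves implicit.
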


\begin{proof}
Since $\en$ is injective and $U$-invariant, we can define an action of $G$ on the index set $X$ by setting $g\cdot x=\en^{-1}\left(U(g)\en(x)U(g)^*\right)$ for all $g\in G$ and $x\in X$. 
Then $\en(g \cdot x) = U(g)\en(x)U(g)^*$. 
Hence, with the notations of Prop.~\ref{prop:easy}, we have $\lambda(g\cdot x) = \lambda(x)$ and $\Pi(g\cdot x) = U(g)\Pi(x)U(g)^*$. 
It follows that $g\cdot X_\en = X_\en$, and $U(g)\left(\sum_{x\in X_\en}\Pi(x)\right) = \left(\sum_{x\in X_\en}\Pi(x)\right)U(g)$. 
The irreducibility of $U$ then implies $\sum_{x\in X_\en}\Pi(x) = \mu\id$ for some $\mu\in\real$ by Schur Lemma.
\end{proof}

If $\en(X) = \{U(g)\en(x_0)U(g)^* : g\in G\}$ for some (hence for any) $x_0\in X$, and $U(g)\en(x_0)U(g)^* \neq \en(x_0)$ for all $g\in G\setminus\{1\}$, Props.~\ref{prop:easy} and \ref{prop:cov} is \cite[Theorem 4.2]{Holevo73}. 
Under the additional constraint $\tr{\en(x)} = \tr{\en(y)}$ for all $x,y\in X$, $U$-invariant state ensembles are named {\em compound geometrically uniform} (CGU) state sets in the terminology of \cite{ElMeVe04}.

\section{Qubit state ensembles with dihedral symmetry}\label{sec:qubit}

In this section we illustrate the previous general results with three examples of different qubit state ensembles. 
The first one (Sec.~\ref{sec:2_spin_eigenbases}) has been already treated in \cite{GoWe10} according to the approach explained in Sec.~\ref{sec:comparison}. 
We provide the solution also for that example, since our method further allows to establish when the problem has a unique optimal measurement. 

\subsection{Notation}

The Hilbert space of the system is $\hi=\complex^2$. We denote by $\vec{\sigma}=(\sigma_1,\sigma_2,\sigma_3)$ the vector of three Pauli matrices, and 
\begin{equation*}
\vec{v}\cdot \vec{\sigma} = v_1 \sigma_1 + v_2 \sigma_2 + v_2 \sigma_3
\end{equation*}
for all $\vec{v}\in\real^3$.  
For any nonzero vector $\vec{v}$, we write $\hat{v} = \vec{v}/\no{\vec{v}}$.
Further, we let $\hat{e}_1$, $\hat{e}_2$ and $\hat{e}_3$ be the unit vectors along the three fixed coordinate axes.

All of the three examples to be presented share a common symmetry group, i.e., the dihedral group $D_2$, consisting of the identity element $1$, together with the three $180^\circ$ rotations $\alpha$, $\beta$ and $\gamma$ along $\hat{e}_1$, $\hat{e}_2$ and $\hat{e}_3$, respectively. 
This group acts on $\complex^2$ by means of the projective unitary representation
$$
U(1) = \id \, , \quad U(\alpha) = \sigma_1 \, ,  \quad U(\beta) = \sigma_2 \,,  \quad U(\gamma) = \sigma_3 \,.
$$
The representation $U$ is irreducible as the operators $\{U(g): g\in D_2\}$ span the whole space $\ca{L}(\complex^2)$.

We will use the Bloch representation of qubit states; all states on $\complex^2$ are parametrized by vectors $\vec{a}\in\real^3$ with \(\no{\vec{a}}\leq 1\), the state corresponding to $\vec{a}$ being
\begin{equation*}
\varrho_{\vec{a}} = \frac{1}{2}(\id + \vec{a}\cdot\vec{\sigma}) \, .
\end{equation*}
For any nonzero vector $\vec{a}$, the eigenvalues $\lambda_+$, $\lambda_-$ of $\varrho_{\vec{a}}$ and the corresponding eigenprojections $\Pi_+$, $\Pi_-$ are
\begin{align*}
\lambda_\pm = \frac{1}{2} \left(1\pm\no{\vec{a}}\right) \, , \quad \Pi_\pm = \frac{1}{2}\left(\id \pm \vau\cdot\vsigma \right) \, .
\end{align*}

\subsection{Two equally probable qubit eigenbases}\label{sec:2_spin_eigenbases}

In the first example, the total state ensemble consists of four pure states $\varrho_{\pm\vau}$ and $\varrho_{\pm\vbu}$, where 
\begin{align*}
\vau & = \cos(\theta/2)\,\veu_1 + \sin(\theta/2)\,\veu_2 \\
\vbu & = \cos(\theta/2)\,\veu_1 - \sin(\theta/2)\,\veu_2 \, ,
\end{align*}
and $\theta$ is an angle in the interval $(0,\pi)$. 
The states $\varrho_{\pm\vau}$ and $\varrho_{\pm\vbu}$ are the eigenstates of the operators $\vau\cdot\vsigma$ and $\vbu\cdot\vsigma$, respectively.
The label set is chosen to be $X = \{+\vau,-\vau,+\vbu,-\vbu\}$.
We assume that all states are equally likely; thus the state ensemble $\en$ is
\begin{equation*}
\en(\vxu) = \frac{1}{8}\left(\id + \vxu\cdot\vec{\sigma}\right) \, ,  \qquad \vxu\in X \, .
\end{equation*}
We will then consider the partition $\Pscr = (X_a,X_b)$, with $X_\ell=\{+\hat{\ell},-\hat{\ell}\}$. As usual, the corresponding state subensembles are denoted by $\en_a$ and $\en_b$, and $q(\ell) = 1/2$ is the probability that a label occurs in $X_\ell$.

Since $\en_\ell(\pm\hat{\ell}) = \frac{1}{4}(\id \pm\hat{\ell}\cdot\vec{\sigma})$, we see that each state ensemble $\en_\ell$ corresponds to preparing one of the two orthogonal pure states $\varrho_{+\hat{\ell}}$, $\varrho_{-\hat{\ell}}$ with equal probability. So, the sharp measurements
\begin{equation*}
\N_{a}(\pm\vau) = \frac{1}{2}\big(\id \pm \vau\cdot\vec{\sigma}\big) \, , \quad \N_{b}(\pm\vbu) = \frac{1}{2}\big(\id \pm \vbu \cdot\vec{\sigma}\big) \, ,
\end{equation*}
perfectly discriminate $\en_{a}$ and $\en_{b}$, respectively; in particular, $\Pg(\en_a) = \Pg(\en_b) = 1$, hence $\Prg(\en;\Pscr)=1$. Moreover, Cor.~\ref{prop:easy-3} applies, and we conclude that $\Pg(\en) =1/2$.
The value of $\Pg(\en)$ can be obtained in various different ways, see e.g.~\cite{Bae13}.
Interestingly, $\Pg(\en)$ does not depend on the angle $\theta$.

To calculate the posterior information guessing probability $\Ppg(\en;\Pscr)$, we apply Prop.~\ref{prop:equiv} and calculate $\Pg(\enf)$, where the auxiliary state ensemble $\enf$ on $X_a\times X_b$ is given as
\begin{align*}
\enf(h\vau,k\vbu) & = \frac{1}{8} \left[ \id+\frac{1}{2}(h\vau+k\vbu)\cdot\vec{\sigma}\right]\,, \qquad h,k\in\{+,-\} \\
& =\begin{cases}
\frac{1}{4} \varrho_{h\cos(\theta/2)\,\veu_1} & \text{ if $h=k$} \\
\frac{1}{4} \varrho_{h\sin(\theta/2)\,\veu_2} & \text{ otherwise} \,,
\end{cases}
\end{align*}
and $\Delta=2$.
This state ensemble is clearly injective, and it is $U$-invariant as the set $\{h\vau+k\vbu : h,k=\pm\}\subset\real^3$ is invariant under the action of the dihedral group $D_2$. 
Hence, Prop.~\ref{prop:easy} is applicable by virtue of Prop.~\ref{prop:cov}, and it thus leads us to find the largest eigenvalue of $\enf(h\vau,k\vbu)$ and the corresponding eigenprojection. 
We obtain
\begin{align*}
\lambda(h\vau,k\vbu) & =
\begin{cases}
\frac{1}{8} \left(1+\sqrt{\frac{1+\cos\theta}{2}}\right) & \text{ if $h=k$} \\
\frac{1}{8} \left(1+\sqrt{\frac{1-\cos\theta}{2}}\right) & \text{ otherwise}
\end{cases} \\
\Pi(h\vau,k\vbu) & =
\begin{cases}
\frac{1}{2}(\id + h\sigma_1) & \text{ if $h=k$} \\
\frac{1}{2}(\id + h\sigma_2) & \text{ otherwise} \,,
\end{cases}
\end{align*}
and hence
\begin{equation}
\Ppg(\en;\Pscr) = \Delta\cdot d \cdot \lambda_\enf = \frac{1}{2} \left(1+\sqrt{\frac{1+\abs{\cos\theta}}{2}}\right) \,.
\end{equation}
As one could have expected, the unique minimum is in $\theta=\pi/2$ and the guessing probabilities are the same for $\theta_1$ and $\theta_2$ when $\theta_2 = \pi - \theta_1$; see Fig.~\ref{fig:2qubit}.

As shown in \cite{BuHe08}, we have $\jmd(\N_{a},\N_{b}) =\jmdu(\N_{a},\N_{b})= 1/\sqrt{1 + \mo{\sin\theta}}$. 
Therefore, the lower bound for $\Ppg(\en;\Pscr)$ given in \eqref{eq:jmd-bound} is
\begin{equation}\label{eq:jmd-bound-qubit}
\Ppg(\en;\Pscr) \geq \frac{1}{2} \left( 1 + \frac{1}{\sqrt{1+\mo{\sin\theta}}} \right) . 
\end{equation}
We see that the right hand side agrees with $\Ppg(\en;\Pscr)$ if and only if $\theta = \pi /2$; see Fig.~\ref{fig:2qubit}.
In particular, this shows that the noisy versions of the form \eqref{eq:noisy-uniform} are optimal only in the case $\theta = \pi /2$.

\begin{figure}
\includegraphics[width=7cm]{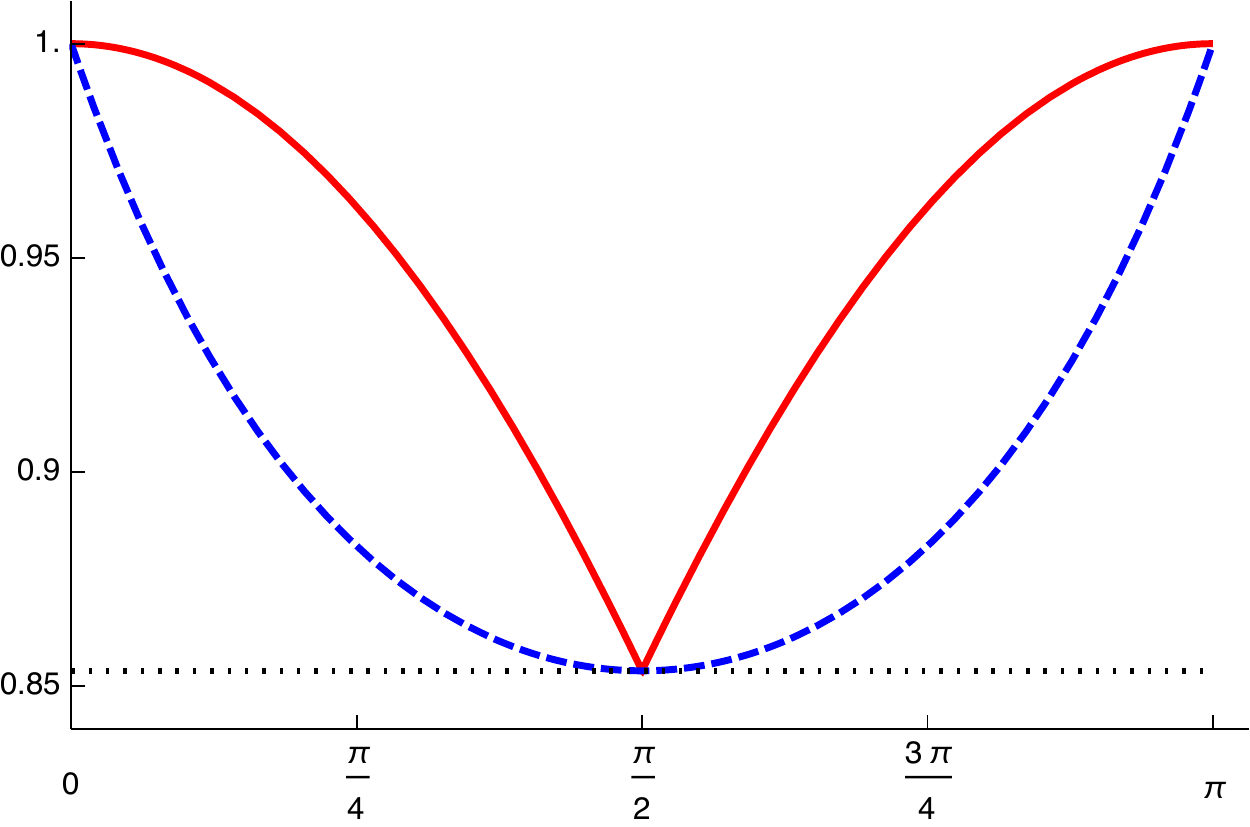}
\caption{\label{fig:2qubit} The red solid curve is the optimal guessing probability $\Ppg(\en;\Pscr)$ as a function of the angle $\theta$ between $\vau$ and $\vbu$, while the blue dashed curve is the lower bound \eqref{eq:jmd-bound-qubit} coming from the optimal joint measurement of uniform noisy versions of $\N_a$ and $\N_b$. }
\end{figure}

In order to find all optimal measurements, we distinguish the three cases $\theta\in (0,\pi/2)$, $\theta = \pi/2$ and $\theta\in (\pi/2,\pi)$.

\subsubsection{Case $\theta\in (0,\pi/2)$}\label{sec:<90}

We have $(X_a\times X_b)_\enf = \{(+\vau,+\vbu),(-\vau,-\vbu)\}$, and the projections $\{\Pi(h\vau,k\vbu) : (h\vau,k\vbu) \in (X_a\times X_b)_\enf\}$ are rank-$1$ linearly independent operators. 
From Cor.~\ref{prop:easy-2} we conclude that the measurement $\C_+$, defined as
\begin{equation*}
\C_+(h\vau,k\vbu) = \begin{cases}
\frac{1}{2} \left( \id + h\sigma_1 \right) & \text{ if $h = k$} \\
0 & \text{ otherwise}
\end{cases}
\end{equation*}
is the unique measurement on $X_a\times X_b$ achieving $\Pg(\enf)$, and hence also $\Ppg(\en;\Pscr)$. 
The two marginals of $\C_+$, $\pi_{1 \ast} \C_+$ and $\pi_{2 \ast} \C_+$, are such that $\pi_{1\ast} \C_+(\pm\vau) = \pi_{2\ast} \C_+(\pm\vbu)$, and Bob therefore is not using the post-measurement information to guess the spin value $+$ or $-$.
Bob can, in fact, choose a measurement $\M$ with outcomes $\{+,-\}$, $\M(h) = \frac{1}{2} \left( \id + h\sigma_1 \right)$, and when Alice announces that her choice was from subset $X_\ell$, Bob's guess is $h\hat{\ell}$, where $h$ is the outcome of $\M$.

In GW's approach, this is a situation in which post-measurement information is useless \cite[Subsec.~III\,C]{GoWe10}, as the diagonal measurement $\C_+'(h,k) = \delta_{h,k} \C_+(h\vau,k\vbu)$ is optimal for the task of discriminating a string in $\ca{X} = \{+,-\}$ even if Alice announces her encoding $a$ or $b$ after the measurement (see Sec.~\ref{sec:comparison}). In spite of this fact, $\Ppg(\en;\Pscr)$ is strictly larger than $\Pg(\en)$.
The reason is that with post-measurement information Bob gets the correct index $\ell\in\{a,b\}$ for free, so he can optimize his measurement to distinguish between one of the two alternatives $\pm\hat{\ell}$, instead of four alternatives $\pm\vau,\pm\vbu$. 

\subsubsection{Case $\theta\in (\pi/2,\pi)$}\label{sec:>90}

Now $(X_a\times X_b)_\enf = \{(+\vau,-\vbu),(-\vau,+\vbu)\}$; hence, proceeding as in the previous case, we find that the unique optimal measurement is
\begin{equation*}
\C_-(h\vau,k\vbu) = \begin{cases}
0 & \text{ if $h = k$} \\
\frac{1}{2} \left( \id + h\sigma_2 \right) & \text{ otherwise} \,.
\end{cases}
\end{equation*}
In this case, we have $\pi_{1 \ast} \C_-(\pm\vau) \neq \pi_{2 \ast} \C_-(\pm\vbu)$, therefore Bob is actually using post-measurement information (see \cite[Subsec.~III\,C]{GoWe10}).

\subsubsection{Case $\theta = \pi/2$}\label{sec:=90}

In this case, $(X_a\times X_b)_\enf = X_a\times X_b$, and the projections $\{\Pi(h\vau,k\vbu) : (h\vau,k\vbu) \in (X_a\times X_b)_\enf\}$ are not linearly independent. 
However, they are still rank-$1$, hence, by \eqref{item:(max)prop_easy} of Prop.~\ref{prop:easy}, any measurement maximizing $\Pg(\enf;\C)$ is of the form $\C(h\vau,k\vbu) = \alpha(h,k)\Pi(h\vau,k\vbu)$ for some function $\alpha : \{+,-\}^2 \to [0,1]$. 
The normalization condition $\sum_{h,k} \C(h\vau,k\vbu) = \id$ imposes 
\begin{align*}
&\alpha(+,+) = \alpha(-,-) = t \\
&\alpha(+,-) = \alpha(-,+) = 1-t
\end{align*}
for some $t\in [0,1]$. 
Therefore, an optimal measurement is any convex combination of the two measurements $\C_+$ and $\C_-$ found earlier. 
The convex combination $\C_0 = \half \C_+ + \half\C_-$ is the optimal measurement given in \eqref{eq:M0}, which in this case reads
\begin{equation*}
\C_0(h\vau,k\vbu) = \frac{1}{4} \left(\id + \frac{h+k}{2}\,\sigma_1 + \frac{h-k}{2}\,\sigma_2\right) \,.
\end{equation*}
Its marginals are
\begin{align*}
\pi_{1 \ast} \C_0(h\vau) & = \frac{1}{\sqrt{2}} \, \N_{a}(h\vau) + \left(1- \frac{1}{\sqrt{2}}\right) \frac{1}{2} \id \, , \\
\pi_{2 \ast} \C_0(k\vbu) & = \frac{1}{\sqrt{2}}\, \N_{b}(k\vbu) + \left(1- \frac{1}{\sqrt{2}}\right) \frac{1}{2} \id \, .
\end{align*}
These are noisy versions of the optimal measurements $\N_{a}$ and $\N_{b}$ for the maximization problems $\max_{\M} \Pg(\en_{a};\M)$ and $\max_{\M} \Pg(\en_{b};\M)$, respectively.
In this case, as we already observed, one implementation of the optimal startegy is hence to make an approximate joint measurement of $\N_{a}$ and $\N_{b}$. 

\subsection{Two qubit state ensembles with dihedral $D_{2n}$-symmetry}

Let us consider a state ensemble $\en$, labeled by the $2n+2$ labels $X=\{+,-,0,1,\ldots,2n-1\}$, and defined as
\begin{align*}
& \en(\pm) = \frac{q_1}{4} \left(\id \pm \sigma_1\right) \, , \\
& \en(k) = \frac{q_2}{4n} \left(\id + \vau_k \cdot\vec{\sigma}\right) \, , \quad k=0,\ldots,2n-1 \, , 
\end{align*}
where
\begin{equation*}
\vau_k =\cos\left(\pi k/n\right)\,\veu_2 + \sin\left(\pi k/n\right)\,\veu_3 
\end{equation*}
and $q_1,q_2>0$, $q_1+q_2=1$.

We consider the partition $\Pscr = (X_\ell)_{\ell\in\{1,2\}}$ of $X$, with $X_1 = \{+,-\}$ and $X_2 = \{0,1,\ldots,2n-1\}$. The corresponding subensembles are $\en_1(\pm) = \frac{1}{4} \left(\id \pm \sigma_1\right)$ and $\en_2(k) = \frac{1}{4n} \left(\id + \vau_k \cdot\vec{\sigma}\right)$, and the probability $q$ is $q(\ell) = q_\ell$.

Each of the two state ensembles $\en_1$, $\en_2$ is injective and $D_2$-invariant. By Prop.~\ref{prop:cov} (or even by direct inspection), it follows that $\en_1$ and $\en_2$ satisfy the hypothesis of Prop.~\ref{prop:easy}. 
Then, by Cor.~\ref{prop:easy-3} we have
\begin{align*}
\Pg(\en) & = \max\{q_1\Pg(\en_1),\,q_2\Pg(\en_2)\} \,.
\end{align*}
The subensemble $\en_1$ consists of two orthogonal pure states, hence it can be perfectly discriminated with the measurement $\N_1(\pm) =  \frac{1}{2} \left(\id \pm \sigma_1\right)$. On the other hand, an optimal measurement to discriminate the states in $\en_2$ is $\N_2(k) = \frac{1}{2n} \left(\id + \vau_k\cdot\vec{\sigma}\right)$ by \eqref{item:(max)prop_easy} of Prop.~\ref{prop:easy}, and $\Pg(\en_2) = 1/n$ by \eqref{item:(lambda)prop_easy} of the same proposition.
It follows that
\begin{align*}
\Pg(\en) & =\max\left\{q_1,\,\frac{q_2}{n}\right\} = \begin{cases}
q_1 & \text{ if } q_1 > \frac{1}{n+1} \\
\frac{1-q_1}{n} & \text{ if } q_1 \leq \frac{1}{n+1}
\end{cases}
\end{align*}
and
\begin{align*}
\Prg(\en;\Pscr)  = q_1 + \frac{q_2}{n} = \frac{(n-1)q_1 + 1}{n} \, .
\end{align*}

To calculate $\Ppg(\en;\Pscr)$, we first form the auxiliary state ensemble $\enf$ of Prop.~\ref{prop:equiv}. Its label set is the Cartesian product $\{+,-\}\times\{0,1,\ldots,2n-1\}$, it has $\Delta=2(nq_1+q_2)$ and it is given by
\begin{equation*}
\enf(h,k) = \frac{1}{8n} \left[ \id + \frac{1}{nq_1 + q_2} \left( nq_1 h\veu_1 + q_2\vau_k\right)\cdot\vec{\sigma} \right] \,.
\end{equation*}
The state ensemble $\enf$ is injective and $U$-invariant. 
Although the symmetry group of $\enf$ can be extended to the order $4n$ dihedral group $D_{2n}\supset D_2$, Prop.~\ref{prop:cov} yields that $D_2$-symmetry is already enough to ensure the applicability of Prop.~\ref{prop:easy}. 
The largest eigenvalue of $\enf(h,k)$ and the corresponding eigenprojection are found to be
\begin{align*}
\lambda(h,k) & = \frac{1}{8n} \left( 1 + \frac{\sqrt{n^2 q_1^2 + q_2^2}}{nq_1 + q_2} \right) \\
 \Pi(h,k) & = \frac{1}{2} \left( \id + \frac{nq_1 h\veu_1 + q_2\vau_k}{\sqrt{n^2 q_1^2 + q_2^2}}\cdot\vec{\sigma} \right) \,.
\end{align*}
It follows that $(X_1\times X_2)_\enf = X_1\times X_2$. The operators $\{\Pi(h,k) : (h,k)\in (X_1\times X_2)_\enf\}$ are rank-$1$ but they are not linearly independent.
Thus, we do not have uniqueness of optimal measurements.
By Theorem~\ref{prop:equiv} and Prop.~\ref{prop:easy}, 
\begin{align*}
& \Ppg(\en;\Pscr) = \Delta\cdot d\cdot\lam_\enf \\
& \quad = \frac{1}{2n} \left( nq_1 + q_2 + \sqrt{n^2 q_1^2 + q_2^2} \right) \\
& \quad = \frac{1}{2n} \left[ (n-1) q_1 + 1 + \sqrt{(n^2+1) q_1^2 - 2q_1 + 1} \right] \, .
\end{align*}
This maximum is attained by $\Ppg(\en;\Pscr;\C)$ if and only if $\C$ is a measurement of the form
\begin{equation*}
\C(h,k) = \frac{\alpha(h,k)}{2} \left[ \id + \frac{nq_1 h\veu_1 + q_2\vau_k}{\sqrt{n^2 q_1^2 + q_2^2}}\cdot\vec{\sigma} \right]
\end{equation*}
where $\alpha : X_1\times X_2 \to [0,1]$ is such that
\begin{align*}
& \sum_{k} \alpha(+,k) = \sum_{k} \alpha(-,k) = 1 \\
& \sum_{k} \left(\alpha(+,k) + \alpha(-,k)\right) \vau_k = 0
\end{align*}
by the normalization condition for $\C$. 

By choosing the constant function $\alpha(h,k) = \frac{1}{2n}$, we recover the optimal measurement of \eqref{eq:M0}.
The marginals of that measurement are the noisy versions of the measurements $\N_1(\pm) =  \frac{1}{2} \left(\id \pm \sigma_1\right)$ and $\N_2(k) = \frac{1}{2n} \left(\id + \vau_k\cdot\vec{\sigma}\right)$ optimally discriminating the subensembles $\en_1$ and $\en_2$.

\subsection{Three orthogonal qubit eigenbases}

Next we consider a state ensemble $\en$ with $6$ elements, having the index set
$X=\{+\veu_1,-\veu_1,+\veu_2,-\veu_2,+\veu_3,-\veu_3\}$ and defined as
\begin{equation*}
\en(\pm\veu_\ell) = \frac{q_\ell}{4} \left(\id \pm \sigma_\ell\right)\,, \qquad \ell\in\{1,2,3\} \, ,
\end{equation*}
where $q_1,q_2,q_3>0$ and $q_1+q_2+q_3=1$.
As the partition of $X$, we fix $\Pscr = (X_\ell)_{\ell\in\{1,2,3\}}$ with $X_\ell = \{+\veu_\ell,-\veu_\ell\}$. The corresponding subensembles are $\en_\ell(\pm\hat{\ell}) = \frac{1}{4} \left(\id \pm \sigma_\ell\right)$, and $q(\ell) = q_\ell$.

Each subsensemble $\en_\ell$ consists of orthogonal pure states and hence can be discriminated with the probability 1, the optimal measurement being $\N_\ell(\pm\hat{\ell}) = \frac{1}{2} ( \id \pm \sigma_\ell)$.
We thus have $\Prg (\en;\Pscr)=1$, and from Cor.~\ref{prop:easy-3} follows that $\Pg(\en) = \max\{q_1,q_2,q_3\}$.

To calculate $\Ppg (\en;\Pscr)$, we again form the auxiliary state ensemble $\enf$, which in this case is
\begin{equation*}
\enf(k_1\veu_1,k_2\veu_2,k_3\veu_3) = \frac{1}{16}\Bigg(\id + \sum_{\ell=1}^3 q_\ell k_\ell\sigma_\ell \Bigg) \,.
\end{equation*}
In the above formula, $k_\ell\in\{+,-\}$; moreover, we have $\Delta = 4$. As in the previous cases, the state ensemble $\enf$ is injective and $U$-invariant, and Prop.~\ref{prop:easy} then applies. 
We obtain
\begin{align*}
\lambda(k_1\veu_1,k_2\veu_2,k_3\veu_3) & = \frac{1}{16}\left( 1 + \no{\vec{q}}\right) \\
\Pi(k_1\veu_1,k_2\veu_2,k_3\veu_3) & = \frac{1}{2}\Bigg(\id + \frac{1}{\no{\vec{q}}} \sum_{\ell=1}^3 q_\ell k_\ell\sigma_\ell\Bigg) \,,
\end{align*}
where we set $\vec{q} = \sum_{\ell=1}^3 q_\ell \veu_\ell$.
Therefore, 
\begin{equation*}
\Ppg (\en;\Pscr) = \Delta\cdot d\cdot\lam_\enf = \frac{1}{2}\left( 1 + \sqrt{q_1^2 + q_2^2 + q_3^2}\right) \,.
\end{equation*}

In the case $q_1=q_2=q_3=1/3$, we have $\Ppg (\en;\Pscr) = (1+1/\sqrt{3})/2$.
As explained in Appendix \ref{app:two}, $\jmd(\N_1,\N_2,\N_3)=\jmdu(\N_1,\N_2,\N_3) = 1/\sqrt{3}$.
Therefore, the guessing probability with post-measurement information equals with the lower bound given in \eqref{eq:jmd-bound}, and we conclude that one way to implement the optimal measurement is to make a joint measurement of noisy versions of $\N_1,\N_2,\N_3$.

Since $(X_1\times X_2\times X_3)_\enf = X_1\times X_2\times X_3$ and all operators $\Pi(k_1\veu_1,k_2\veu_2,k_3\veu_3)$'s are rank-$1$, any optimal measurement is of the form
\begin{align*}
& \C(k_1\veu_1,k_2\veu_2,k_3\veu_3) = \frac{\alpha(k_1,k_2,k_3)}{2}\Bigg(\id + \frac{1}{\no{\vec{q}}} \sum_{\ell=1}^3 q_\ell k_\ell\sigma_\ell\Bigg)
\end{align*}
for some function $\alpha:\{+,-\}^3 \to [0,1]$. 
The normalization of $\C$ implies that, for every $k_1,k_2,k_3 \in \{+,-\}$, 
\begin{equation*}
\sum_{i,j}\alpha(k_1,i,j) = \sum_{i,j}\alpha(i,k_2,j) = \sum_{i,j}\alpha(i,j,k_3) = 1 \, .
\end{equation*}
One solution is to take the constant function $\alpha \equiv 1/4$, and that choice gives the optimal measurement of \eqref{eq:M0}. 
The marginals of this measurement are noisy versions of $\N_1,\N_2$ and $\N_3$. Another possibility is
$$
\alpha(k_1,k_2,k_3) = \begin{cases}
1 & \text{ if } k_1 = k_2 = k_3 \\
0 & \text{ otherwise}\,.
\end{cases}
$$
In GW's approach of Sec.~\ref{sec:comparison}, the latter choice corresponds to the diagonal optimal measurement $$
\C'(k_1,k_2,k_3) = \begin{cases}
\frac{1}{2} \left(\id + k \hat{q}\cdot\vsigma\right) & \text{ if } k_1 = k_2 = k_3 \equiv k \\
0 & \text{ otherwise}\,.
\end{cases}
$$
In particular, we see that from the point of view of GW's approach, post-measurement information is useless in this example.

\section{Two Fourier conjugate mutually unbiased bases}\label{sec:mub}

In this section, we consider the discrimination problem with post-measurement information for two mutually unbiased bases (MUB) in arbitrary finite dimension $d$. We restrict to the case in which the two bases are conjugated by the Fourier transform of the cyclic group $\Zb_d = \{0,\ldots,d-1\}$, endowed with the composition law given by addition ${\rm mod}\,d$. Moreover, we assume that all elements of each basis have equal apriori probabilities. However, we allow the occurrence probability of a basis to differ from that of the other one.

In formulas, we fix two orthonormal bases $(\varphi_h)_{h\in\Zb_d}$ and $(\psi_k)_{k\in\Zb_d}$ of $\hi$, such that
$$
\psi_k = \frac{1}{\sqrt{d}} \sum_{h\in\Zb_d} \omega^{hk} \phii_h \qquad \text{where} \qquad \omega = \e^{\frac{2\pi\i}{d}} \,.
$$
They satisfy the mutual unbiasedness condition
\begin{equation*}
\abs{\ip{\varphi_h}{\psi_k}} = \frac{1}{\sqrt{d}} \qquad \forall h,k\in\Zb_d \,.
\end{equation*}

We label the two bases by means of the symbols $X_\phii=\{0\phii,\ldots,(d-1)\phii\}$ and $X_\psi=\{0\psi,\ldots,(d-1)\psi\}$, respectively, and we let $X=X_\phii \cup X_\psi$ be the overall label set. 
Notice that, consistently with the previous examples, the elements of \(X\) are denoted by juxtaposing the index of the vector with the symbol of the basis which the vector belongs to (for example, the symbol \(0\phii\) labels the first vector in the basis \((\varphi_h)_{h\in\Zb_d}\)).
Then, we partition $X$ and use it to construct a state ensemble $\en$ as follows:
\begin{align}
\Pscr & = (X_\ell)_{\ell\in\{\phii,\psi\}} \label{eq:MUB_partition} \\
\en(h\ell) & = \frac{q_\ell}{d} \kb{\ell_h}{\ell_h}\,, \qquad  h\ell\in X \,, \label{eq:MUB_ensemble}
\end{align}
where $q_\phii,q_\psi > 0$ with $q_\phii+q_\psi=1$. The partition $\Pscr$ yields the two subensembles
$$
\en_\ell(h\ell) = \frac{1}{d} \kb{\ell_h}{\ell_h} \,, \qquad h\in\Zb_d \,,
$$
with $\ell\in\{\phii,\psi\}$; the probability that a label occurs in the subset $X_\ell$ is $q(\ell) = q_\ell$.

Note that in Sec.~\ref{sec:2_spin_eigenbases}, the two equally probable qubit eigenbases with angle $\theta = \pi/2$ constitute two MUB that are conjugated by the Fourier transform of the cyclic group $\Zb_2$. Indeed, this follows by setting
\begin{align*}
\phii_0 & = \frac{1}{\sqrt{2}}\left(\e^{-\i\frac{\pi}{8}}\eta_1 + \e^{\i\frac{\pi}{8}}\eta_2\right) \\
\phii_1 & = \frac{\i}{\sqrt{2}}\left(\e^{-\i\frac{\pi}{8}}\eta_1 - \e^{\i\frac{\pi}{8}}\eta_2\right) \,,
\end{align*}
where $(\eta_1,\eta_2)$ is the canonical (computational) basis of $\complex^2$, choosing $q_\phii = q_\psi = 1/2$, and relabeling
$$
0\phii\to +\hat{a} \qquad 1\phii\to -\hat{a} \qquad 0\psi\to +\hat{b} \qquad 1\psi\to -\hat{b} \,.
$$

We define two measurements $\N_\phii$ and $\N_\psi$ with outcomes in $X_\phii$ and $X_\psi$, respectively, as
$$
\N_\ell(h\ell) = \kb{\ell_h}{\ell_h}  \qquad h\in\Zb_d \,.
$$
Each of these measurements perfectly discriminates the corresponding subensemble $\en_\ell$. Moreover, once again Cor.~\ref{prop:easy-3} can be applied, thus leading to
\begin{gather*}
\Pg(\en) = \max\{q_\phii,\,q_\psi\} = \abs{q_\phii - \frac{1}{2}} + \frac{1}{2} \\
\Prg(\en;\Pscr) = 1 \,.
\end{gather*}

By Theorem \ref{prop:equiv}, optimizing the posterior information guessing probability $\Ppg(\en;\Pscr;\C)$ over all measurements $\C$ on $X_\phii\times X_\psi$ amounts to the same optimization problem for $\Pg(\enf;\C)$, where $\enf$ is the auxiliary state ensemble
\begin{equation*}
\enf(h\phii,k\psi) = \frac{1}{d^2} \left( q_\phii \kb{\varphi_h}{\varphi_h} + q_\psi \kb{\psi_k}{\psi_k}\right) \, .
\end{equation*}
The state ensemble $\enf$ has the direct product abelian group $G=\Zb_d\times\Zb_d$ as its natural symmetry group. Indeed, by defining the generalized Pauli operators
$$
W(r,s) = \sum_{z\in\Zb_d} \omega^{sz} \kb{\phii_{r+z}}{\phii_z} \,,
$$
we obtain a projective unitary representation of $\Zb_d\times\Zb_d$, such that
\begin{gather*}
W(r_1,s_1)W(r_2,s_2) = \omega^{s_1 r_2} W(r_1+r_2,s_1+s_2) \\
W(r,s)\phii_h = \omega^{sh} \phii_{r+h} \qquad W(r,s)\psi_k = \omega^{-r(s+k)} \psi_{s+k}
\end{gather*}
(see e.g.~\cite{Schwinger60,CaHeTo12}; here, $W(r,s) = U_r V_s$ in terms of the discrete position and momentum displacement operators $U_r$ and $V_s$ defined in \cite[Subsec.~IV A]{CaHeTo12}). Then, the state ensemble $\enf$ is $W$-invariant, as
\begin{equation}\label{eq:covF}
W(r,s)\enf(h\phii,k\psi)W(r,s)^* = \enf((h+r)\phii,(k+s)\psi)
\end{equation}
for all $h,k,r,s\in\Zb_d$. Since the representation $W$ is irreducible \cite{Schwinger60} and the state ensemble $\enf$ is clearly injective, Prop.~\ref{prop:easy} can be applied to $\enf$ by Prop.~\ref{prop:cov}. In order to proceed as usual, we need the next lemma.

\begin{lemma}\label{lem:another_one}
For all $h,k\in\Zb_d$, the largest eigenvalue and the corresponding eigenprojection of $\enf(h\phii,k\psi)$ are
\begin{subequations}\label{eq:diagF}
\begin{align}
& \lambda(h\phii,k\psi) = \frac{1}{2d^2}\left[1+\sqrt{(q_\phii - q_\psi)^2 + \frac{4}{d}\,q_\phii q_\psi}\right] \label{eq:diagF_1}\\
\label{eq:diagF_2}
\begin{split}
& \Pi(h\phii,k\psi)
= \kb{\alpha\phii_h+\beta\omega^{-hk}\psi_k}{\alpha\phii_h+\beta\omega^{-hk}\psi_k} \\
& \quad = W(h,k) \kb{\alpha\phii_0+\beta\psi_0}{\alpha\phii_0+\beta\psi_0} W(h,k)^* \,,
\end{split}
\end{align}
\end{subequations}
where the couple $(\alpha,\beta)$ is the unique solution to the following system of equations:
\begin{subequations}\label{eq:sys_ab}
\begin{gather}
\alpha > 0, \quad \beta > 0 \label{eq:sys_ab_1} \\
\alpha^2 + \beta^2 + \frac{2}{\sqrt{d}} \alpha\beta = 1 \label{eq:sys_ab_2} \\
\frac{\alpha}{\beta} = \frac{\sqrt{d}}{2q_\psi} \left[q_\phii - q_\psi + \sqrt{(q_\phii - q_\psi)^2 + \frac{4}{d}\,q_\phii q_\psi}\right] \,. \label{eq:sys_ab_3}
\end{gather}
\end{subequations}
\end{lemma}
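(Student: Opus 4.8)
The plan is to reduce the whole statement to the point $(h,k)=(0,0)$ by covariance, and then to solve a two-by-two eigenvalue problem. Indeed, \eqref{eq:covF} with $h=k=0$ gives $\enf(h\phii,k\psi)=W(h,k)\,\enf(0\phii,0\psi)\,W(h,k)^*$, so $\enf(h\phii,k\psi)$ is unitarily equivalent to $\enf(0\phii,0\psi)$: it has the same eigenvalues, and its eigenprojections are the images under conjugation by $W(h,k)$ of those of $\enf(0\phii,0\psi)$. (That $\enf$ satisfies the hypotheses of Prop.~\ref{prop:easy} is already ensured by Prop.~\ref{prop:cov}, so only the explicit diagonalization is missing.) Thus first I would diagonalize $A:=d^2\,\enf(0\phii,0\psi)=q_\phii\kb{\varphi_0}{\varphi_0}+q_\psi\kb{\psi_0}{\psi_0}$.

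Since $A$ vanishes on the orthogonal complement of $V:=\mathrm{span}\{\varphi_0,\psi_0\}$, it has at most two nonzero eigenvalues and it suffices to study $\left.A\right|_{V}$. The mutual unbiasedness relation gives $\ip{\varphi_0}{\psi_0}=1/\sqrt d$, immediate from $\psi_0=d^{-1/2}\sum_h\varphi_h$; passing to an orthonormal basis of $V$, one computes that $\left.A\right|_{V}$ has trace $q_\phii+q_\psi=1$ and determinant $q_\phii q_\psi(1-1/d)$. Hence its eigenvalues are $\lambda_\pm=\tfrac12\bigl[1\pm\sqrt{1-4q_\phii q_\psi(1-1/d)}\,\bigr]$, and the identity $1-4q_\phii q_\psi=(q_\phii-q_\psi)^2$ recasts the discriminant as $(q_\phii-q_\psi)^2+\tfrac4d q_\phii q_\psi$. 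This quantity is strictly positive, so $\lambda_+>\lambda_->0$ and $\lambda_+$ is the largest eigenvalue of $A$, with a one-dimensional eigenspace; dividing by $d^2$ yields \eqref{eq:diagF_1}.

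For the eigenprojection I would look for an eigenvector of $A$ of the form $w=\alpha\varphi_0+\beta\psi_0$. From $\kb{\varphi_0}{\varphi_0}w=(\alpha+\beta/\sqrt d)\varphi_0$, $\kb{\psi_0}{\psi_0}w=(\alpha/\sqrt d+\beta)\psi_0$, and the linear independence of $\varphi_0,\psi_0$, the equation $Aw=\lambda_+w$ becomes the scalar pair $q_\phii(\alpha+\beta/\sqrt d)=\lambda_+\alpha$ and $q_\psi(\alpha/\sqrt d+\beta)=\lambda_+\beta$; these two equations are proportional precisely because $\lambda_+$ solves the characteristic equation, and the second one gives $\alpha/\beta=\sqrt d\,(\lambda_+-q_\psi)/q_\psi$. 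Substituting $\lambda_+-q_\psi=\tfrac12\bigl[q_\phii-q_\psi+\sqrt{(q_\phii-q_\psi)^2+\tfrac4d q_\phii q_\psi}\,\bigr]$ (which uses $1-2q_\psi=q_\phii-q_\psi$) reproduces exactly \eqref{eq:sys_ab_3}, and the ratio is seen to be positive. Imposing $\no{w}=1$, i.e. $\alpha^2+\beta^2+\tfrac2{\sqrt d}\alpha\beta=1$, is \eqref{eq:sys_ab_2}, while the sign choice $\alpha,\beta>0$ is \eqref{eq:sys_ab_1}; since $\alpha/\beta$ is a fixed positive number, the normalization pins down $\beta>0$ and hence $\alpha$ uniquely, giving the claimed uniqueness. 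Finally $\Pi(0\phii,0\psi)=\kb{w}{w}$, and applying $W(h,k)$, with $W(h,k)\varphi_0=\varphi_h$ and $W(h,k)\psi_0=\omega^{-hk}\psi_k$, transports this to $\kb{\alpha\varphi_h+\beta\omega^{-hk}\psi_k}{\alpha\varphi_h+\beta\omega^{-hk}\psi_k}$, which is \eqref{eq:diagF_2}.

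No step here is a genuine obstacle: the argument is just the spectral decomposition of a rank-two positive operator, transported by the symmetry. The only points needing care are the bookkeeping identities $1-4q_\phii q_\psi=(q_\phii-q_\psi)^2$ and $\abs{\ip{\varphi_0}{\psi_0}}^2=1/d$, which are what turn the raw discriminant into the form displayed in \eqref{eq:diagF}, and keeping track of the phase $\omega^{-hk}$ produced when the rank-one projection is pushed forward by $W(h,k)$.
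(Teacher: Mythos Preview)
Your proof is correct and follows essentially the same approach as the paper: reduce to $(h,k)=(0,0)$ by the covariance relation \eqref{eq:covF}, diagonalize the rank-two operator on $\mathrm{span}\{\varphi_0,\psi_0\}$, and then transport by $W(h,k)$. The only cosmetic difference is that you extract the eigenvalues via the trace--determinant invariants, whereas the paper writes out the $2\times 2$ matrix of $\left.\enf(0\phii,0\psi)\right|_{\hi_0}$ in the nonorthogonal basis $(\varphi_0,\psi_0)$ and computes its characteristic polynomial directly; both routes lead to the same quadratic and the same eigenvector equation.
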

Eq.~\eqref{eq:sys_ab_2} describes an ellipse in the $\alpha\beta$-plane centered at $(0,0)$ and having the minor axis along the $\alpha=\beta$ direction. The solution of \eqref{eq:sys_ab} is where this ellipse intersects the half-line originating at $(0,0)$, lying in the first quadrant \eqref{eq:sys_ab_1} and having the positive slope given by \eqref{eq:sys_ab_3}.
\begin{proof}
By means of the covariance condition \eqref{eq:covF} for $\enf$, it is enough to prove \eqref{eq:diagF} only for $h=k=0$. In order to do it, we preliminarly observe that the operator $\enf(0\phii,0\psi)$ leaves the linear subspace $\hi_0 = \spanno{\phii_0,\psi_0}$ invariant, and it is null on $\hi_0^\perp$. Moreover, with respect to the linear (nonorthogonal) basis $(\phii_0,\psi_0)$ of $\hi_0$, the restriction of $\enf(0\phii,0\psi)$ to $\hi_0$ has the matrix form
$$
\left.\enf(0\phii,0\psi)\right|_{\hi_0} = \frac{1}{d^2} \left(\begin{array}{cc}
q_\phii & \frac{1}{\sqrt{d}}q_\phii \\
\frac{1}{\sqrt{d}}q_\psi & q_\psi
\end{array}\right) \,.
$$
The roots of the characteristic polynomial of the above matrix are
$$
\lam_{\pm} = \frac{1}{2d^2}\left[1\pm\sqrt{(q_\phii - q_\psi)^2 + \frac{4}{d}\,q_\phii q_\psi}\right]
$$
(recall $q_\phii + q_\psi = 1$), and they are clearly different. This gives \eqref{eq:diagF}. By direct inspection of the previous matrix, the vector $\chi=\alpha\phii_0 + \beta\psi_0$ is a nonzero $\lambda_+$-eigenvector of $\enf(0\phii,0\psi)$ if and only if the ratio $\alpha/\beta$ is given by \eqref{eq:sys_ab_3}. Normalization of $\chi$ gives \eqref{eq:sys_ab_2}. Since the ratio $\alpha/\beta$ is real and positive, \eqref{eq:sys_ab_2} and \eqref{eq:sys_ab_3} have a unique common solution satisfying \eqref{eq:sys_ab_1}.
\end{proof}

\begin{proposition}\label{prop:main_MUB}
For the state ensemble $\en$ of \eqref{eq:MUB_ensemble} and the partition $\Pscr$ of \eqref{eq:MUB_partition}, we have
\begin{equation}\label{eq:PpostMUB}
\Ppg(\en;\Pscr) = \frac{1}{2}\left[1+\sqrt{(q_\phii - q_\psi)^2 + \frac{4}{d}\,q_\phii q_\psi}\right] \,.
\end{equation}
Moreover, a measurement on $X_\phii\times X_\psi$ maximizing the guessing probability $\Ppg(\en;\Pscr;\C)$ is
\begin{equation}\label{eq:optiMUB}
\begin{aligned}
& \C_0(h\phii,k\psi) = \frac{1}{d} \, \kb{\alpha\phii_h+\beta\omega^{-hk}\psi_k}{\alpha\phii_h+\beta\omega^{-hk}\psi_k} \\
& \qquad = \frac{1}{d} W(h,k) \kb{\alpha\phii_0+\beta\psi_0}{\alpha\phii_0+\beta\psi_0} W(h,k)^* \,,
\end{aligned}
\end{equation}
where \((\alpha,\beta)\) is the solution to the system of equations \eqref{eq:sys_ab}.
The measurement $\C_0$ is the unique measurement maximizing the guessing probability $\Ppg(\en;\Pscr;\C)$ if and only if the dimension $d$ of $\hi$ is odd.
\end{proposition}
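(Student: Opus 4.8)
The plan is to combine the reduction of Theorem~\ref{prop:equiv} with Prop.~\ref{prop:easy}, which applies to the auxiliary ensemble $\enf$ because $\enf$ is injective and $W$-invariant with $W$ irreducible, so that Prop.~\ref{prop:cov} guarantees condition \eqref{eq:easy_condition_1} for $\enf$, and then to read off the relevant quantities from the diagonalization in Lemma~\ref{lem:another_one}. Here $n_\phii=n_\psi=d$ and $q(\phii)=q_\phii$, $q(\psi)=q_\psi$, so $\Delta=\Delta(q_\phii,q_\psi;d,d)=d\cdot d\,(q_\phii/d+q_\psi/d)=d$. Since all the eigenvalues $\lambda(h\phii,k\psi)$ in \eqref{eq:diagF_1} coincide, $\lambda_\enf$ is that common value and $X_\enf=X_\phii\times X_\psi$. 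Then Prop.~\ref{prop:easy}(b) gives $\Pg(\enf)=d\,\lambda_\enf$, and Theorem~\ref{prop:equiv} yields $\Ppg(\en;\Pscr)=\Delta\cdot\Pg(\enf)=d^2\lambda_\enf$, which is exactly \eqref{eq:PpostMUB}.

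For the optimal measurement, each $\Pi(h\phii,k\psi)$ is a rank-one projection and $\abs{X_\phii\times X_\psi}=d^2$, so Prop.~\ref{prop:easy}(a) gives $\mu=d^2/d=d$. Hence the canonical optimizer $\M_0$ of \eqref{eq:M0} is $\M_0(h\phii,k\psi)=\mu^{-1}\Pi(h\phii,k\psi)=\tfrac1d\Pi(h\phii,k\psi)$, which is precisely $\C_0$ of \eqref{eq:optiMUB}.

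For uniqueness I would apply Cor.~\ref{prop:easy-2} to $\enf$; its hypotheses hold since \eqref{eq:easy_condition_1} is valid and every $\Pi(h\phii,k\psi)$ is rank one, so $\C_0$ is the unique optimizer if and only if the $d^2$ operators $\Pi(h\phii,k\psi)=W(h,k)\kb{\chi}{\chi}W(h,k)^*$, with $\chi=\alpha\phii_0+\beta\psi_0$, are linearly independent in $\elle{\hi}$. The Weyl operators $\{W(r,s)\}_{r,s\in\Zb_d}$ form an orthogonal basis of $\elle{\hi}$ (cf.~\cite{Schwinger60,CaHeTo12}), so write $\kb{\chi}{\chi}=\sum_{r,s\in\Zb_d}c_{r,s}W(r,s)$ with $c_{r,s}=\tfrac1d\tr{W(r,s)^*\kb{\chi}{\chi}}$; the commutation relations of $W$ give $W(h,k)W(r,s)W(h,k)^*=\omega^{kr-hs}W(r,s)$. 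Hence the $d^2\times d^2$ matrix expressing the family $\{\Pi(h\phii,k\psi)\}_{h,k}$ in the basis $\{W(r,s)\}_{r,s}$ is the product of $\mathrm{diag}\big((c_{r,s})_{r,s}\big)$ and the invertible Fourier-type matrix $(\omega^{kr-hs})_{(h,k),(r,s)}$; therefore the operators are linearly independent if and only if $c_{r,s}\neq0$ for every $r,s\in\Zb_d$. Using $W(r,s)\phii_0=\phii_r$, $W(r,s)\psi_0=\omega^{-rs}\psi_s$, and $\ip{\phii_0}{\psi_s}=\ip{\psi_0}{\phii_r}=1/\sqrt{d}$, a short computation gives
\[
\ipp{\chi}{W(r,s)}{\chi}=\alpha^2\delta_{r,0}+\beta^2\delta_{s,0}+\frac{\alpha\beta}{\sqrt{d}}\,(1+\omega^{-rs}) \, ,
\]
and $c_{r,s}$ vanishes exactly when this does. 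Since $\alpha,\beta>0$, the right-hand side is strictly positive whenever $r=0$ or $s=0$ (it equals $1$ at $(0,0)$), while for $r,s$ both nonzero it vanishes precisely when $\omega^{rs}=-1$. Such a pair $r,s\in\Zb_d\setminus\{0\}$ exists if and only if $d$ is even (take $r=1$, $s=d/2$); for odd $d$, invertibility of $2$ modulo $d$ turns $\omega^{rs}=-1$ into the contradiction $rs\equiv 0\pmod d$. Thus all $c_{r,s}$ are nonzero exactly when $d$ is odd, which by Cor.~\ref{prop:easy-2} is equivalent to uniqueness of $\C_0$.

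The routine parts are the value of $\Ppg(\en;\Pscr)$ and the identification of $\C_0$, both immediate from the cited results. The main obstacle is the uniqueness claim: the key point is to translate linear independence of the Heisenberg--Weyl orbit of $\kb{\chi}{\chi}$ into nonvanishing of all its Weyl coefficients, and then to settle the elementary dichotomy on solvability of $\omega^{rs}=-1$; one must also be careful to use $\alpha,\beta>0$ in order to exclude cancellations in the slices $r=0$ and $s=0$.
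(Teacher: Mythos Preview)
Your argument is correct and follows the same route as the paper: reduce via Theorem~\ref{prop:equiv}, apply Prop.~\ref{prop:easy} to the auxiliary ensemble $\enf$ (whose hypothesis is guaranteed by Prop.~\ref{prop:cov}), and read off $\lambda_\enf$, $X_\enf$, and $\Pi(h\phii,k\psi)$ from Lemma~\ref{lem:another_one}; this gives \eqref{eq:PpostMUB} and identifies $\C_0$ with the measurement \eqref{eq:M0}. For uniqueness the paper likewise invokes Cor.~\ref{prop:easy-2} but then simply cites \cite[Prop.~9]{CaHeTo12} for the statement that the Heisenberg--Weyl orbit $\{W(h,k)\kb{\chi}{\chi}W(h,k)^*\}$ is linearly independent iff $d$ is odd, whereas you supply a self-contained proof of this fact by expanding $\kb{\chi}{\chi}$ in the Weyl basis and reducing the question to nonvanishing of all coefficients $c_{r,s}$; your computation of $\ipp{\chi}{W(r,s)}{\chi}$ and the parity dichotomy for $\omega^{rs}=-1$ are both correct.
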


\begin{proof}
We have already seen that Prop.~\ref{prop:easy} can be applied to the state ensemble $\enf$. With the notations of that proposition, we have
\begin{gather*}
\lam_\enf = \frac{1}{2d^2}\left[1+\sqrt{(q_\phii - q_\psi)^2 + \frac{4}{d}\,q_\phii q_\psi}\right] \\
(X_\phii\times X_\psi)_\enf = X_\phii\times X_\psi
\end{gather*}
by Lemma \ref{lem:another_one}. In particular, the value of $\lam_\enf$ and Theorem \ref{prop:equiv} with $\Delta = d$ imply \eqref{eq:PpostMUB}. Moreover, still by Lemma \ref{lem:another_one}, the measurement $\C_0$ in \eqref{eq:optiMUB} is the optimal measurement \eqref{eq:M0} for the guessing probability $\Pg(\enf;\C)$, hence also for $\Ppg(\enf;\Pscr;\C)$. By Cor.~\ref{prop:easy-2}, there is no other measurement maximizing the guessing probability $\Ppg(\en;\Pscr;\C)$ if and only if the operators $\{\Pi(h\phii,k\psi) : h,k\in\Zb_d\}$ are linearly independent. The argument used in the proof of \cite[Prop.~9]{CaHeTo12} shows that this is equivalent to the dimension $d$ of $\hi$ being odd.
\end{proof}

In the particular case \(q_\phii = q_\psi=\frac{1}{2}\), formulas \eqref{eq:PpostMUB} and \eqref{eq:optiMUB} and  simplify as follows:
\begin{align*}
\Ppg(\en;\Pscr) & = \frac{1}{2}\left(1+\frac{1}{\sqrt{d}}\right) \\
\C_0(h\phii,k\psi) & = \frac{1}{2(\sqrt{d}+d)} \\
& \qquad \times W(h,k) \kb{\phii_0+\psi_0}{\phii_0+\psi_0} W(h,k)^* \,,
\end{align*}
which, for \(d=2\), are easily seen to be consistent with the results of Sec.~\ref{sec:2_spin_eigenbases}.

For general $q_\phii$, $q_\psi$, the first marginal of $\C_0$ is
\begin{align}
\pi_{1 \ast} \C_0(h\phii) = & \frac{1}{d} \sum_{k\in\Zb_d} \left[\alpha^2\kb{\phii_h}{\phii_h} + \beta^2 \kb{\psi_k}{\psi_k}\right. \nonumber \\
& \left. + \alpha\beta \left(\kb{\omega^{-hk}\psi_k}{\phii_h} + \kb{\phii_h}{\omega^{-hk}\psi_k}\right)\right] \nonumber \\
= & \alpha^2 \kb{\phii_h}{\phii_h} + \frac{\beta^2}{d} \id + \frac{2\alpha\beta}{\sqrt{d}} \kb{\phii_h}{\phii_h} \nonumber \\
= & t_\phii \N_\phii(h\phii) + (1-t_\phii)\frac{1}{d}\,\id \,, \label{eq:noisyQ}
\end{align}
where
$
t_\phii = \alpha^2 + \frac{2\alpha\beta}{\sqrt{d}} \,.
$
Here, we have used the fact that
$$
\sum_{k\in\Z_d} \omega^{-hk}\psi_k = \sqrt{d} \phii_h \qquad \forall h\in\Z_d \,.
$$
With a similar calculation,
\begin{align}
\pi_2 \ast \C_0(k\psi) & = t_\psi \N_\psi(h\psi) + (1-t_\psi)\frac{1}{d}\,\id \label{eq:noisyP} 
\end{align}
where
$
t_\psi = \beta^2 + \frac{2\alpha\beta}{\sqrt{d}} \,.
$
We conclude that the marginals of $\C_0$ are noisy versions of $\N_\phii$ and $\N_\psi$.

We remark that approximate joint measurements of $\N_\phii$ and $\N_\psi$ were studied in \cite{CaHeTo12}. In particular, by \cite[Props.~5 and 6]{CaHeTo12}, noisy measurements of the form \eqref{eq:noisyQ} and \eqref{eq:noisyP} are jointly measurable if and only if
\begin{subequations}
\begin{gather}
t_\phii + t_\psi \leq 1 \label{eq:triangle}\\
\text{or} \qquad t_\phii^2 + t_\psi^2 + \frac{2(d-2)}{d} (1-t_\phii)(1-t_\psi) \leq 1 \,. \label{eq:half-ellipse} 
\end{gather}
\end{subequations}
Moreover, regardless of the dimension $d$ of $\hi$, there is a unique joint measurement when the equality is attained in \eqref{eq:half-ellipse}.
One can confirm that $t_\phii$ and $t_\psi$ with $\alpha$ and $\beta$ given by \eqref{eq:sys_ab_2} lead to equality in \eqref{eq:half-ellipse}, hence $\C_0$ can be identified as that unique joint measurement. It also follows by Prop.~\ref{prop:main_MUB} that for even dimensions $d$ there are measurements maximizing $\Ppg(\en;\Pscr;\C)$ whose marginals $\pi_{1\ast}\C$ and $\pi_{2\ast}\C$ are not noisy versions of $\N_\phii$ and $\N_\psi$.

\section{Acknowledgement}

This work was performed as part of the Academy of Finland Centre of Excellence program (project 312058).


\newpage

\begin{appendix}

\section{Necessity of the rank-$1$ condition in Corollary \ref{prop:easy-2}}\label{app:one}

The following state ensemble $\en : \{1,2,3\}\to\elle{\complex^2}$
\begin{gather*}
\en(1) = \left(\begin{array}{ccc}
1/2 & 0 & 0 \\ 0 & 1/2 & 0 \\ 0 & 0 & 0
\end{array}\right) \qquad \en(2) = \left(\begin{array}{ccc}
1/2 & 0 & 0 \\ 0 & 0 & 0 \\ 0 & 0 & 1/2
\end{array}\right) \\
\en(3) = \left(\begin{array}{ccc}
0 & 0 & 0 \\ 0 & 1/2 & 0 \\ 0 & 0 & 1/2
\end{array}\right)
\end{gather*}
satisfies \eqref{eq:easy_condition_1} and item \eqref{it:(i)_prop:easy-2} of Cor.~\ref{prop:easy-2}. However, it does not fulfill item \eqref{it:(ii)_prop:easy-2} of the same corollary, as both the following measurements
$$
\M_0(x) = \en(x), \qquad x\in\{1,2,3\}\,,
$$
and
\begin{gather*}
\M_1(1) = \en(1) + \en(2) - \en(3) \\
\M_1(2) = \en(1) - \en(2) + \en(3) \\
\M_1(3) = -\en(1) + \en(2) + \en(3)
\end{gather*}
attain the maximum guessing probability $\Pg(\en;\M_i) = \Pg(\en) = 3/2$.

\newpage

\section{Joint measurability degree of three orthogonal qubit measurements}\label{app:two}

Let $\N_\ell(\pm\veu_\ell) = \half ( \id \pm \sigma_\ell)$ for $\ell=1,2,3$. 
We aim to show that $\jmd(\N_1,\N_2,\N_3)=1/\sqrt{3}$, which means that we need to find the largest $t$ such that the noisy versions 
\begin{equation}\label{eq:3noisy}
\nN_\ell(\pm\veu_\ell) = t \N_\ell(\pm\veu_\ell) + (1-t) \nu_\ell(\pm\veu_\ell) \id
\end{equation}
are compatible.
The probability distributions $\nu_1,\nu_2$ and $\nu_3$ can be chosen freely, meaning that we optimize among all their possible choices. 
It has been shown in \cite{LiSpWi11} that \(\jmdu(\N_1,\ldots,\N_m)=1/\sqrt{3}\).
Therefore, the remaining point in order to conclude that $\jmd(\N_1,\N_2,\N_3)=1/\sqrt{3}$ is  provided by the following result.

\begin{proposition}
If $\nN_1,\nN_2,\nN_3$ given by \eqref{eq:3noisy} are compatible, then the observables
\[
\nN'_\ell(\pm\veu_\ell) = t \N_\ell(\pm\veu_\ell) + (1-t) \frac12 \id \,, \qquad \ell=1,2,3\,,
\]
are also compatible. 
\end{proposition}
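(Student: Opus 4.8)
The plan is to exploit the $D_2$-symmetry that is already present in this example. The three coordinate spin measurements $\N_1,\N_2,\N_3$ are jointly covariant under the projective representation $U(1)=\id$, $U(\alpha)=\sigma_1$, $U(\beta)=\sigma_2$, $U(\gamma)=\sigma_3$: conjugation by any $U(g)$ sends the family $\{\N_1,\N_2,\N_3\}$ back to itself, acting only by sign flips $+\veu_\ell\leftrightarrow-\veu_\ell$ on the outcomes. The crucial bookkeeping fact is that, for each fixed index $\ell$, exactly two of the four elements of $D_2$ act trivially on the $\ell$-th outcome (namely $1$ and the $180^\circ$ rotation about $\veu_\ell$), while the other two swap $+\veu_\ell$ and $-\veu_\ell$. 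Hence averaging a joint measurement over the group, together with the induced outcome relabelings, should preserve the ``sharp part'' $t\N_\ell$ of each marginal while replacing the arbitrary noise distribution $\nu_\ell$ by its average over the $D_2$-orbit, which is the uniform distribution on $X_\ell=\{+\veu_\ell,-\veu_\ell\}$.

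Concretely, I would start from a joint measurement $\M$ of $\nN_1,\nN_2,\nN_3$, so that $\M$ has outcome set $X_1\times X_2\times X_3$ and $\pi_{\ell\ast}\M=\nN_\ell$ for $\ell=1,2,3$. For $g\in D_2$ let $g\cdot(x_1,x_2,x_3)=(g^{(1)}x_1,g^{(2)}x_2,g^{(3)}x_3)$ be the induced action, where $g^{(\ell)}$ is the identity or the transposition on $X_\ell$ according to whether $g$ fixes or reverses the axis $\veu_\ell$; by the covariance of $\N_\ell$ one has $U(g)\N_\ell(x)U(g)^*=\N_\ell(g^{(\ell)}x)$. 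I then set
\[
\M'(\vec x)=\tfrac14\sum_{g\in D_2}U(g)\,\M(g\cdot\vec x)\,U(g)^*\,,
\]
where we used that every $g\in D_2$ is an involution. Summing over $\vec x$ gives $\tfrac14\sum_g U(g)\,\id\,U(g)^*=\id$, and each summand is positive, so $\M'$ is a legitimate POVM on $X_1\times X_2\times X_3$.

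It remains to identify the marginals of $\M'$. Fixing $\ell$ and $y\in X_\ell$ and summing $\M'$ over all tuples whose $\ell$-th entry equals $y$, the inner sum over $\M$ is $\pi_{\ell\ast}\M(g^{(\ell)}y)=\nN_\ell(g^{(\ell)}y)$, so $\pi_{\ell\ast}\M'(y)=\tfrac14\sum_g U(g)\nN_\ell(g^{(\ell)}y)U(g)^*$. Substituting $\nN_\ell(x)=t\N_\ell(x)+(1-t)\nu_\ell(x)\id$, using the covariance relation together with $(g^{(\ell)})^2=\mathrm{id}$, each summand becomes $t\N_\ell(y)+(1-t)\nu_\ell(g^{(\ell)}y)\id$, hence
\[
\pi_{\ell\ast}\M'(y)=t\N_\ell(y)+(1-t)\Big(\tfrac14\sum_{g\in D_2}\nu_\ell(g^{(\ell)}y)\Big)\id\,.
\]
By the counting remarked above, the bracketed average equals $\tfrac12\big(\nu_\ell(+\veu_\ell)+\nu_\ell(-\veu_\ell)\big)=\tfrac12$, so $\pi_{\ell\ast}\M'=\nN'_\ell$ for every $\ell$. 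Therefore $\M'$ is a joint measurement of $\nN'_1,\nN'_2,\nN'_3$, which proves the proposition. The only delicate point is keeping track of the $D_2$-action on the product outcome set and verifying that, for each coordinate separately, two group elements act trivially and two act by the flip; once this is in place, the rest is a routine symmetrization computation.
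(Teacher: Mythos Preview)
Your proof is correct, but it follows a different symmetrization than the paper's. The paper uses a single \emph{antiunitary} involution: the qubit spin-flip $A=\sigma_2 J$ (with $J$ complex conjugation in the canonical basis), which satisfies $A\sigma_\ell A^*=-\sigma_\ell$ for all $\ell=1,2,3$. Given a joint measurement $\C$ of $\nN_1,\nN_2,\nN_3$, the paper forms the two-term average
\[
\C'(k_1\veu_1,k_2\veu_2,k_3\veu_3)=\tfrac12\big[\C(k_1\veu_1,k_2\veu_2,k_3\veu_3)+A\,\C(-k_1\veu_1,-k_2\veu_2,-k_3\veu_3)\,A^*\big],
\]
so that each marginal becomes $\tfrac12\big[\nN_\ell(k\veu_\ell)+A\nN_\ell(-k\veu_\ell)A^*\big]=t\N_\ell(k\veu_\ell)+(1-t)\tfrac12\id$. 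The point is that $A$ flips all three Bloch axes simultaneously, which no \emph{unitary} can do (the corresponding rotation $-I\notin SO(3)$); hence the antiunitary is essential in the paper's two-element averaging.

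Your route stays entirely within unitary operations by averaging over the four-element group $D_2$: since each $180^\circ$ rotation about $\veu_\ell$ flips the other two axes and fixes the $\ell$-th, the $D_2$-orbit of any outcome in $X_\ell$ contains $+\veu_\ell$ and $-\veu_\ell$ with equal multiplicity, which is exactly what forces the noise to become uniform. This is a clean alternative that avoids introducing antiunitaries and makes explicit use of the $D_2$-covariance already set up in the paper; the price is a four-term average instead of two.
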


\begin{proof}
We assume that $\nN_1,\nN_2,\nN_3$ defined in \eqref{eq:3noisy} are compatible, and we let $\C$ be any measurement having marginals $\nN_1,\nN_2,\nN_3$.
We denote by $A:\complex^2 \to \complex^2$ the antiunitary operator satisfying $A\sigma_\ell A^* = - \sigma_\ell$ for $\ell=1,2,3$. 
Explicitly, \(A=\sigma_2 \, J\), where \(J\) denotes  complex conjugation with respect to the canonical basis of $\complex^2$.
We then define $\C'$ as
\begin{align*}
\C'(k_1 \veu_1,k_2 \veu_2,k_3 \veu_3) & = \half \left[ \C(k_1 \veu_1,k_2 \veu_2,k_3 \veu_3) \right.\\ &\left.+ A\C(-k_1 \veu_1,-k_2 \veu_2,-k_3 \veu_3)A^* \right] \, .
\end{align*}
A direct calculation shows that the marginals of $\C'$ are $\nN_1'$, $\nN_2'$, $\nN_3'$.
\end{proof}

\end{appendix}

\end{document}